%% file: Submitted_hybridization_in_crypto.tex
\documentclass[preprint,10pt]{elsarticle}

\input{preamble}
\journal{Computers \& Security}

\begin{document}

\begin{frontmatter}

\title{A Stackelberg Model for Hybridization in Cryptography}

\author[inst1]{Willie Kouam}
\author[inst1]{Stefan Rass}
\author[inst2]{Zahra Seyedi}
\author[inst1]{Shahzad Ahmad}
\author[inst3]{Eckhard Pfluegel}

\affiliation[inst1]{organization={Johannes Kepler University},
addressline = {LIT Secure and Correct Systems Lab},
city={Linz},
country={Austria}}

\affiliation[inst2]{organization={Polytechnic University of Milan},
addressline={Department of Electronics, Information and Bioengineering},
city={Milan},
country={Italy}}

\affiliation[inst3]{organization={Kingston University},
city={London Area},
country={United Kingdom}}

\begin{abstract}
Similar to a strategic interaction between rational and intelligent agents, cryptography problems can be examined through the prism of game theory. In this setting, the agent aiming to protect a message is called the defender, while the one attempting to decrypt it, generally for malicious purposes, is the attacker. To strengthen security in cryptography, various strategies have been developed, among which hybridization stands out as a key concept in modern cryptographic design. This strategy allows the defender to select among different encryption algorithms (classical, post-quantum, or hybrid) while carefully balancing security and operational costs. On the other side, the attacker, limited by available resources, chooses cryptanalysis methods capable of breaching the selected algorithm. We model this interaction as a Stackelberg cryptographic hybridization problem under resource constraints. Here, the defender randomizes over encryption algorithms, and the attacker observes the choice before selecting suitable cryptanalysis methods. The attacker's decision is framed as a conditional optimization problem, which we refer to as the ``attacker subgame''. We then propose a dynamic programming approach for the attacker’s subgame, while the defender's Stackelberg optimization is formulated as a linear program.
\end{abstract}

\begin{keyword}
Cryptography \sep Cybersecurity \sep Game theory \sep Hybridization  \sep Stackelberg model
\end{keyword}

\end{frontmatter}

\section{Introduction}
Modern cryptographic deployments face a fundamental dilemma. The looming threat of quantum computing renders RSA and elliptic curve cryptography vulnerable to polynomial-time attacks \cite{shor1997}, while NIST-standardized post-quantum alternatives impose 2-5$\times$ computational overhead and 5-10 MB memory footprints \cite{alagic2022status}. No single algorithm satisfies the heterogeneous requirements of contemporary systems; TLS handshakes demand sub-100ms latency, IoT devices face severe memory constraints, and regulatory mandates require 20-year security guarantees. Consider, for instance, a financial institution processing transactions. RSA-2048 provides 10-20ms encryption \cite{gueron2002enhanced,bernstein2011ebacs} but faces quantum threats \cite{shor1997}. Lattice-based schemes offer quantum resistance at 50-100ms cost \cite{bos2018crystals,ducas2018crystals}. AES-256 is fast (submillisecond per block \cite{gueron2009intel}) and quantum-resistant \cite{chen2016report}  but requires secure key exchange. Traditional approaches treat algorithm selection as static cost minimization \cite{zhang2016case}: deploy the cheapest algorithm meeting a security threshold. These frameworks have critical flaws.

\begin{itemize}
    \item Security optimization treats adversaries as passive, ignoring their adaptive responses. Attackers observe deployed algorithms and optimally allocate resources among cryptanalytic methods, concentrating on RSA, which enables specialized factorization attacks, while diversification forces resource spreading.
    
    \item Real systems consume incommensurable resources (CPU, memory, bandwidth, latency). Single-cost aggregation obscures trade-offs: an algorithm cheap in CPU may be expensive in memory.
    
    \item Organizations rarely know adversary capabilities reliably. Nation-state attackers differ vastly from criminal organizations, and harvest-now-decrypt-later attacks create deep uncertainty about effective attack budgets.
    
    \item Regulatory standards mandate crypto-agility \cite{barker2018transitioning}, yet traditional optimization provides no principled diversification mechanism. Over-concentration creates single points of failure.
\end{itemize}

To address these limitations, we model cryptographic hybridization, strategic selection of algorithm portfolios, as a Stackelberg game. The defender (leader) acts first and commits to a mixed strategy $\boldsymbol{p} = (p_1, \ldots, p_n)$ over $n$ algorithms, where $p_i$ is the probability of deploying algorithm~$E_i$. The attacker (follower) observes the realized algorithm and subsequently selects a subset $S$ of cryptanalysis methods optimizing the probability of success as well as the cost to perform such an action, subject to a budget $k$. Knowing that the attacker will adjust their strategy in response, the defender anticipates these optimal moves and selects $\boldsymbol{p}$ to minimize the expected loss. In doing so, the defender must consider multiple practical constraints, such as CPU and memory usage, latency, quantum resistance, and diversification across different classes of encryption methods. This formulation captures essential strategic elements: observable defender commitments (via TLS negotiations, protocol parameters), attacker resource allocation under scarcity (attempting multiple methods increases success but consumes budget), and mixed-strategy uncertainty (forcing attackers to prepare for multiple contingencies). Our main contributions are summarized as follows:
\begin{itemize}
    \item  We provide the first Stackelberg formulation of cryptographic hybridization, unifying security games and cryptographic algorithm selection.
    \item We prove the attacker problem is non-monotone submodular maximization with knapsack constraints and propose a dynamic programming (DP) algorithm to solve it.
    \item  We formulate the defender decision-making problem as a linear program (LP) with heterogeneous resources (CPU, memory, latency), strategic requirements (quantum resilience), and diversification policies. We prove equilibrium existence and bound support size: with $u$ binding constraints, at most $u$ algorithms receive positive probability. 
    \item Assuming that the attacker’s available budget is not publicly known, we adopt a worst-case regret minimization approach to propose a robust solution under this uncertainty.
\end{itemize}
The remainder of this paper is organized as follows. Section~\ref{sec:related_work} reviews related work and positions our contribution within the existing literature. Section~\ref{sec:system_model} formalizes the system model, which serves as the basis for the subsequent analysis, and Section~\ref{sec:problem_formulation} is dedicated to the problem formulation. With this groundwork established, Section~\ref{sec:game_solution} develops the Stackelberg game solution with attacker and defender optimization.  Section~\ref{sec:worst_case_regret_minimization} implements the worst-case regret minimization approach. Section~\ref{sec:performance_analysis} subsequently discusses implementation details and performance analysis results, and finally, Section~\ref{sec:conclusion} concludes the paper.

\section{Related work} \label{sec:related_work}

Security games model strategic interactions between defenders and attackers. Stackelberg security games, where defenders commit first, and attackers observe before responding, have been successfully applied to airport security \cite{pita2008deployed}, patrolling strategies \cite{tambe2011security}, cyber-insurance~\cite{li2025insure}, and network defense scenarios \cite{zhang2023security, rass2023game, mvah2024countering}. The FlipIt game \cite{van2013flipit} which captures the strategic timing of actions, can be leveraged to model cryptographic key rotation, highlighting how game-theoretic approaches are relevant to real-world cryptographic practices. However, no prior work addresses strategic cryptographic algorithm selection under heterogeneous resource constraints and adaptive adversaries. In the domain of cryptographic optimization, algorithm selection has traditionally focused on single-objective problems: minimizing latency \cite{bernstein2011ebacs}, minimizing energy consumption, or minimizing cost subject to fixed security thresholds. Post-quantum migration strategies \cite{moody2016post} consider transition planning but treat security as a static parameter rather than the outcome of strategic interaction. These approaches aggregate heterogeneous resources (CPU, memory, bandwidth, latency) into scalar cost metrics, obscuring fundamental trade-offs between incommensurable dimensions. In this work, we approach the problem by recognizing that the choice of encryption and cryptanalysis algorithms is a strategic decision. Each party makes its selection while anticipating the likely actions of the other. This interaction occurs under budgetary constraints that reflect the practical parameters involved in evaluating the different algorithms.

Our attacker problem connects to the well-studied area of submodular maximization under knapsack constraints in combinatorial optimization. Ene and Nguyen~\cite{ene2017nearly} provided nearly linear-time algorithms with constant-factor approximation ratios for this problem class. Amanatidis et al.~\cite{amanatidis2021} develop sample-based and adaptive greedy methods that balance approximation quality with query complexity, which is crucial in our setting, where each "query" corresponds to evaluating an attack method's marginal contribution. Cui et al.~\cite{cui2023practical} achieve improved adaptivity bounds through parallel algorithms. However, most existing algorithms for these types of problems have been designed for large-scale instance problems. For instance, in the video recommendation study by Amanatidis et al.~\cite{amanatidis2020_fast}, the ground set includes $n = 62,000$ movies. In contrast, in our context, for any given encryption algorithm~$E_i$, the set of relevant cryptanalysis methods $\mathcal{D}(i)$ is usually much smaller. In practice, the cryptanalysis of a specific encryption algorithm~$E_i$ is performed by considering a finite set of known attack techniques, each corresponding to a distinct cryptanalytic paradigm such as differential, linear, integral, boomerang/rectangle, related‑key, algebraic, meet‑in‑the‑middle, impossible differential, and other variants, as systematically enumerated~\cite{lim2022sok}. This bounded set of methods typically falls into the tens rather than of thousands or more potential attacks. To handle this, we propose a DP algorithm for these more manageable instances, and if for a given algorithm~$E_i$ the value $|\mathcal{D}(i)|$ is not manageable by the DP algorithm, we instead employ the \textsc{SampleGreedy} algorithm~\cite{amanatidis2020_fast}, while tolerating the approximation error introduced.

The attacker’s available budget for carrying out actions may also be unknown to the defender. In such cases, robust optimization under uncertainty has been widely explored in areas like operational planning \cite{bernstein2011ebacs} and discrete optimization \cite{kouvelis2013robust}. Approaches based on minimax regret, as used in~\cite{ma2023optimizing}, aim to minimize worst-case performance degradation relative to optimal hindsight decisions. Although robust optimization has been used for security resource allocation, applying it to the selection of cryptographic algorithms, when the attacker’s budget is unknown to the defender, and the adversary acts strategically, is a novel approach. Our work fills a critical gap at the intersection of security games and applied cryptography by providing the first integration of Stackelberg games, multi-constraint optimization, and submodular maximization for cryptographic hybridization. We explicitly model defender-attacker strategic interaction while accounting for heterogeneous resources, quantum resilience requirements, diversification policies, and uncertainty in adversary capabilities.

\section{System model} \label{sec:system_model}

We model cryptographic algorithm selection as a two-player Stackelberg game between a defender (organization deploying cryptographic systems) and an attacker (adversary attempting to compromise encrypted data). This section formalizes the strategic scenario, players' action spaces, information structure, and payoff functions. The following conventions are employed in the subsequent section: sets are denoted by capital letters (e.g., $S, T$), vectors are represented in bold (e.g., $\boldsymbol{v}, \boldsymbol{w}$), scalars are written in lowercase (e.g., $x, y$), and the action sets of the players are indicated in cursive script (e.g., $\mathcal{A}, \mathcal{B}$). $E_i$ denotes a defender's encryption algorithm while $A_j$ represents an attacking method.

\subsection{Cryptographic deployment scenario}

Consider an organization tasked with choosing encryption algorithms to safeguard sensitive information, the latter faces many options. Modern cryptographic landscape provides a wide variety of options, including classical schemes such as RSA and elliptic curve cryptography, symmetric algorithms like AES and ChaCha20, post-quantum alternatives based on lattices or hashes, and hybrid approaches that combine multiple primitives. Each algorithm presents distinct trade-offs in computing cost, memory use, processing speed, resistance to quantum attacks, and cryptanalytic maturity (i.e., the extent to which it has been tested by cryptanalysis). The defender does not commit to a single algorithm but rather deploys a probability distribution over available algorithms. This may be realized through randomized selection at deployment time, heterogeneous deployment across different services or user sessions, or temporal rotation of algorithms. The attacker observes the algorithm protecting a particular target (through protocol negotiation, traffic analysis, or cryptanalytic reconnaissance) and must allocate limited computational resources among available cryptanalytic methods.

\subsection{Players' action spaces and Payoffs}

Let $\mathcal{A} = \{E_1, \ldots, E_n\}$ be the set of encryption algorithms available to the defender, and given an encryption algorithm~$E_i$ let $\mathcal{D}(i)$ denote the set of cryptanalytic methods that an attacker could apply to compromise it; the overall pool of attack methods is thus $\mathcal{D}_{\mathrm{all}} = \displaystyle \bigcup_{i=1}^{n} \mathcal{D}(i)$. The interaction between the parties is modeled as a sequential two‑player game in which the defender, acting as the leader, first selects an encryption algorithm from the set $\mathcal{A}$. Once this choice has been made and observed by the attacker, the latter acts as the follower and selects cryptanalytic methods from the corresponding set $\mathcal{D}(i)$ associated with the chosen encryption algorithm~$E_i$. The players' strategies and the game's outcome are described in the following.

\begin{itemize}
    \item Defender: A \emph{pure strategy} of the defender consists of deterministically deploying a single encryption algorithm~$E_i$. Conversely, a \emph{mixed strategy} is a probability distribution over the set of pure strategies, i.e., a function $f: \mathcal{A} \to [0,1] \text{ such that } \displaystyle \sum_{E_i \in \mathcal{A}} f(E_i) = 1.$ In our model, the defender publicly commits to a \emph{mixed strategy} $\boldsymbol{q} \in \Delta_n = \Big\{\, \boldsymbol{q} = (q_1, \ldots, q_n) \in \mathbb{R}^n \;\big|\; q_i \geqslant  0 \;\forall i, \ \displaystyle\sum_{i=1}^{n} q_i = 1 \,\Big\}$, where $q_i \equiv \Pr$(\text{Algorithm } $E_i$ is selected) denotes the probability that the encryption algorithm~$E_i$ is implemented.

    \item Attacker: Once the encryption algorithm~$E_i \in \mathcal{A}$ is revealed, the attacker selects a subset $S \subseteq \mathcal{D}(i)$ of appropriate cryptanalysis methods. Each method $A_j \in S$ consumes a specific amount of resources, represented by a cost $k_j$, and the attacker operates under a total resource budget~$k$, which implies that the condition $ \displaystyle  \sum_{A_j \in S} k_j \leqslant k$ must be satisfied. For every such subset $S$, the attacker’s cost function is defined so that its negative is submodular. A  function $h: 2^N \to \mathbb{R}$ is \emph{submodular} if 
    \begin{align*}
        h(A \cup \{j\}) - h(A) \geqslant h(B \cup \{j\}) - h(B), \quad \forall A \subseteq B \subset N, j \notin B.
    \end{align*}
\vspace{-0.6cm}
\end{itemize}\nobreak
\noindent For each pair of encryption and cryptanalysis algorithms $(E_i, A_j)$, let $s_{i,j} \in [0,1]$ the \emph{success probability} of the method~$A_j$ against algorithm~$E_i$ (i.e., the likelihood that the method $A_j$ successfully breaks algorithm~$E_i$); the resulting payoff structure is summarized in the normal-form representation in Table~\ref{normal_form_representation}. Each cell contains the tuple $\left((s_{i,j},\, c_i),\,(s_{i,j},\, k_j)\right)$ which corresponds to the pure strategy pair $(E_i, A_j)$ capturing the defender's \emph{(success probability, implementation cost)} and the attacker's \emph{(success probability, attack cost)} values, respectively. In our framework, the defender’s cost is vector-valued, capturing multiple practical criteria at once, including operational overhead, CPU and memory usage, latency, and reliability. Formally, this is represented as $c_i = \left( c_i^{(\mathrm{OP})}, c_i^{(\text{CPU})}, c_i^{(\text{MEM})}, \tau_i, r_i \right)$ and characterizes the trade-offs associated with choosing an encryption algorithm $E_i$: the fixed operational cost $c_i^{(\mathrm{OP})} \geqslant 0$ reflects the deployment and maintenance expenses; $c_i^{(\mathrm{CPU})} \geqslant 0$ captures the computational effort per operation, affecting efficiency and scalability; $c_i^{(\mathrm{MEM})} \geqslant 0$ represents the memory requirements, which are critical in resource-constrained environments; $\tau_i \geqslant 0$ denotes the latency per operation, determining execution speed and system responsiveness; and $r_i \in [0,1]$ measures the algorithm’s resilience to quantum attacks, reflecting its long-term security.

\begin{table}[ht!]
  \centering
  \caption{Two-player payoff table for players' pure strategies: row $E_i$ is a defender's encryption algorithm; column $A_j$ is an attacker's cryptanalysis method. Each cell $(i,j)$ contains\\ $\bigl((\text{success } s_{i,j},\;\text{defender's\ cost } \boldsymbol{c_i}),\;(\text{success } s_{i,j},\;\text{attacker's\ cost } k_j)\bigr)$.}
  \label{normal_form_representation}
  \resizebox{\textwidth}{!}{%
  \begin{tabular}{|c| c c c c|}
    \hline
    & $A_1$ & $A_2$ & $A_3$ & $\dots$ \\
    \hline
    $E_1$ & $\bigl((s_{1,1}, \boldsymbol{c_1}),(s_{1,1},k_1)\bigr)$ 
          & $\bigl((s_{1,2}, \boldsymbol{c_1}),(s_{1,2},k_2)\bigr)$ 
          & $\bigl((s_{1,3}, \boldsymbol{c_1}),(s_{1,3},k_3)\bigr)$ 
          & $\dots$ \\
    $E_2$ & $\bigl((s_{2,1}, \boldsymbol{c_2}),(s_{2,1},k_1)\bigr)$ 
          & $\bigl((s_{2,2}, \boldsymbol{c_2}),(s_{2,2},k_2)\bigr)$ 
          & $\bigl((s_{2,3},\boldsymbol{c_2}),(s_{2,3},k_3)\bigr)$ 
          & $\dots$ \\
    $E_3$ & $\bigl((s_{3,1}, \boldsymbol{c_3}),(s_{3,1},k_1)\bigr)$ 
          & $\bigl((s_{3,2}, \boldsymbol{c_3}),(s_{3,2},k_2)\bigr)$ 
          & $\bigl((s_{3,3}, \boldsymbol{c_3}),(s_{3,3},k_3)\bigr)$ 
          & $\dots$ \\
    $\vdots$ & $\vdots$ & $\vdots$ & $\vdots$ & $\ddots$ \\
    \hline
  \end{tabular}%
  }
\end{table}

\noindent To avoid trivial cases, the success probabilities $s_{i, j} \in (0,1)$ for $ i, j \in \{1, \dots, n\}$ reflects the inherent uncertainty in every attack method: values close to $0$ indicate a very low chance of success, while values near $1$ suggest a high likelihood of compromise, but never absolute certainty. Two qualitative observations motivate the game-theoretic analysis that follows.
\begin{itemize}
  \item If an encryption algorithm~$E_i$ has consistently low $s_{i,\cdot}$ values (i.e., low success probabilities across all applicable cryptanalysis methods), it is highly resilient even against an attacker with a sufficiently large budget. Such an algorithm is therefore an attractive option for the defender’s mixed strategy from a security perspective.
  \item If \emph{all} rows exhibit high $s_{i,\cdot}$ values  (i.e., the attacker is highly likely to breach all systems), the current algorithm portfolio~$\mathcal{A}$ is systemically weak. In this case, randomization alone provides little protection, and the defender must consider genuinely resistant alternatives.
\end{itemize}

\section{Problem formulation} \label{sec:problem_formulation}

Building on the system model of Section~\ref{sec:system_model}, we now formalize the optimization problems faced by each player. The Stackelberg structure of our problem induces a bi-level optimization issue: the attacker's subgame (the inner level) is solved first as a function of the defender's announced strategy; the defender then optimizes over the attacker's best-response map (the outer level).

\subsection{Attacker subgame} \label{subsec:attacker_subproblem}

Accounting for the observed algorithm~$E_i$, the attacker aims to optimize a \emph{weighted utility} that balances two objectives, which are maximizing the probability of breaking the encryption and minimizing the resources required to perform the attack. Assuming \emph{independent} attack attempts, the overall success probability for a chosen subset $S$ is $P_{\mathrm{succ}}(S) = 1 - \displaystyle \prod_{A_j \in S} (1 - s_{i,j})$, and the attacker must select a subset $S \subseteq \mathcal{D}(i)$ of cryptanalysis methods such that the total cost does not exceed the budget $k$, i.e., $\displaystyle\sum_{A_j \in S} k_j \leqslant k$, and satisfying,
\begin{align}
\max_{S \subseteq \mathcal{D}(i)} \quad v \cdot P_{\mathrm{succ}}(S) - \phi(S), \qquad 
\text{such that} \quad \sum_{A_j \in S} k_j \leqslant k.
\label{attacker_problem_init}
\end{align}
In problem~\eqref{attacker_problem_init}, $v > 0$ represents the value the attacker assigns to a successful compromise and $\displaystyle \phi:\ 2^{D(i)} \to \mathbb{R}, \quad S \mapsto \phi(S) = \varphi\bigg(\sum_{A_j \in S} k_j\bigg)$ 
with $\varphi:\mathbb{R}_+ \to \mathbb{R}_+$ a monotone increasing cost function that translates resource expenditure into the same utility units as $v$; that is $\forall S, S^{'} \in 2^{D(i)}$, $\displaystyle S \subseteq S^{'} \implies \varphi\bigg(\sum_{A_j \in S} k_j\bigg) \leqslant \varphi\bigg(\sum_{A_j \in S^{'}} k_j\bigg)$. 
The function $\varphi$ may be linear or convex to reflect an increasing marginal cost of acquiring resources (for instance, modeling the rapidly rising cost of additional GPUs when renting extra clusters). 
\begin{remark}
 We assume that the function $ - \phi$ is submodular so that the attacker’s objective preserves the desirable diminishing-returns property. Some examples include the linear form $\phi(S) = \varphi\Bigl(\displaystyle \sum_{A_j\in S} k_j\Bigr) = \lambda \sum_{A_j\in S} k_j$ and convex, increasing forms such as $\phi(S) = \varphi\Bigl(\displaystyle \sum_{A_j\in S} k_j\Bigr) = \lambda \Bigl(\displaystyle \sum_{A_j\in S} k_j\Bigr) + \alpha \Bigl(\displaystyle \sum_{A_j\in S} k_j\Bigr)^2$, $\lambda, \alpha \geqslant 0$. 
\end{remark} \nobreak
\noindent The problem~\eqref{attacker_problem_init} can therefore be written as the following optimization problem:
\begin{align*}
\text{maximize} \quad  & F(S) = v \cdot \Bigl(1 - \prod_{A_j \in S}(1 - s_{i,j})\Bigr) - \phi(S) \\
\text{subject to} & \sum_{A_j\in S} k_j \leqslant k,
\end{align*}
    
\noindent The attacker's optimization problem for a realized algorithm~$E_i$ is therefore expressed as,
\begin{align}
  S^\star(i) = \arg\max_{S \subseteq \mathcal{D}(i)}\;
    v \cdot \Bigl(1 - \prod_{A_j \in S}(1 - s_{i,j})\Bigr) - \phi(S)
  \quad \text{such that} \quad \sum_{A_j\in S} k_j \leqslant k
  \label{attacker_problem}
\end{align}\nobreak

\begin{remark}
 The objective in~\eqref{attacker_problem} mixes a \emph{value} term $v \cdot P_{\mathrm{succ}}(S)$ with a \emph{resource-cost} term $\displaystyle\sum_{A_j\in S} k_j$. An alternative to the normalization approach adopted above is to express the problem as $\displaystyle \max_{S\subseteq\mathcal D(i)}\; P_{\mathrm{succ}}(S) - \lambda \sum_{A_j\in S} k_j$, where $\lambda$ converts resource cost into probability units.
\end{remark}

\noindent The attacker’s optimization problem, as formulated above, exhibits several structural properties that are crucial for its analysis and solution.
\noindent \paragraph{Submodularity property} 
	
\noindent Consider the function $g : 2^{D(i)} \to \mathbb{R}$ such that $g(S) = P_{\mathrm{succ}}(S)$, then 
\begin{align*}
		&g(A \cup \{A_j\}) - g(A) = 1 - \prod_{A_r \in A \cup \{A_j\}} (1 - s_{i, r}) - \left(1 - \prod_{A_r \in A} (1 - s_{i, r}) \right) = s_{i, j} \cdot \prod_{A_r \in A} (1 - s_{i, r}),\\
		&g(B \cup \{A_j\}) - g(B) = s_{i, j} \cdot \prod_{A_r \in B} (1 - s_{i, r});\ \text{ since $0 < 1 - s_{i, j} < 1$ and $A \subseteq B$,}\\
        &\prod_{A_r \in A} (1 - s_{i, r}) \geqslant \prod_{A_r \in B} (1 - s_{i, r}) \quad \Rightarrow \quad g(A \cup \{A_j\}) - g(A) \geqslant g(B \cup \{A_j\}) - g(B).
\end{align*}\nobreak
\noindent Hence $g$ is submodular. Adding a submodular function $ -  \phi$ preserves submodularity. Therefore the function
$F : 2^{D(i)} \to \mathbb{R}, \quad F(S) = v \cdot g(S) -\phi(S), \ \forall S \subseteq D(i)$ is submodular.

\paragraph{Non-monotonicity} For any $j \notin S$,
\begin{align*}
     F(S \cup \{j\}) - F(S) & = v\cdot (g(S\cup\{A_j\}) - g(S)) - \phi(S \cup\{A_j\}) + \phi(S) \quad \text{i.e., }\\
     F(S \cup \{j\}) - F(S) \leqslant 0 & \iff \phi(S \cup\{A_j\}) - \phi(S) \geqslant v\cdot (g(S \cup \{A_j\}) - g(S))
\end{align*}\nobreak
\noindent Hence $F$ is
\emph{non-monotone}.  Together with the budget constraint
$\displaystyle \sum_{A_j\in S} k_j \leqslant k$, problem~\eqref{attacker_problem} is a
\emph{non-monotone submodular maximization problem under a knapsack constraint}.

\subsection{Defender Optimization Problem}
\label{subsec:defender_problem}

Let $P_{\mathrm{succ}}^\star(i)$ denote the optimal attacker success probability
against algorithm~$E_i$, achieved by the optimal best-response set $S^\star(i)$ obtained by solving~\eqref{attacker_problem}.  The defender evaluates each $E_i$ using a \emph{per-algorithm utility} value,
\begin{align}
  \ell_i = L_i \cdot \bigl(1 - P_{\mathrm{succ}}^\star(i)\bigr)
          - \gamma_{\mathrm{OP}}\, c_i^{(\mathrm{OP})}
          - \gamma_{\mathrm{CPU}}\, c_i^{(\mathrm{CPU})}
          - \gamma_{\mathrm{MEM}}\, c_i^{(\mathrm{MEM})}
          - \gamma_{\tau}\,\tau_i
          + \gamma_r\, r_i,
  \label{per_algo_utility}
\end{align}\nobreak
where the parameters are defined in Table~\ref{tab:algo_params} while the defender specifies weights translate each parameter into utility units, as explained in Table \ref{weights_optimization} below.

\begin{longtable}{p{2.5cm}p{6.5cm}p{4.25cm}}
\caption{Algorithm-specific parameters in the defender's optimization problem. \label{tab:algo_params}} \\
\toprule
\textbf{Parameter} & \textbf{Meaning} & \textbf{Obtainable from}  \\
\midrule
\endfirsthead

\bottomrule
\endlastfoot

$L_i > 0$ & \emph{Value protected}: asset value remaining secure if algorithm~$E_i$ successfully resists the attack. Higher $L_i$ indicates more critical data/systems protected by algorithm~$E_i$ & business impact analysis (BIA) \\[2pt]

$P_{\mathrm{succ}}^\star(i) \in [0,1]$ & \emph{Attacker's optimal success probability}: maximum probability that a rational, resource-constrained attacker compromises algorithm~$E_i$ & the attacker subgame optimization (i.e., the value $P_{\text{succ}}(S^{\star}(i))$) \\[2pt]

$c_i^{(\mathrm{OP})} \geqslant 0$ & \emph{Operational/setup cost}: fixed deployment costs associated with algorithm~$E_i$ & costs for purchasing hardware + running costs to maintain it; from business continuity management \\[2pt]

$c_i^{(\mathrm{CPU})} \geqslant 0$ & \emph{CPU cost per operation}: computational cycles required for one encryption algorithm & binary code for algorithms: depending on programming language and compiler optimizations \\[2pt]

$c_i^{(\mathrm{MEM})} \geqslant 0$ & \emph{Memory footprint}: RAM for keys, intermediate values, and algorithm state & space complexity of the algorithm (as given in the literature) \\[2pt]

$\tau_i \geqslant 0$ & \emph{Latency per operation}: wall-clock time including all overheads & studies conducted in the literature \\[2pt]

$r_i \in [0,1]$ & \emph{Quantum-resilience score}: assessment of algorithm's resistance to quantum attacks; $r_i = 1$ means quantum-safe (e.g.\ AES-256, Kyber-1024). $r_i \in (0,1)$: partial resilience or uncertain quantum security & expert elicitation (statistics) \\
\end{longtable}

\begin{longtable}{p{2cm}p{11.75cm}}
\caption{Defender's utility weights for different resource and strategic dimensions. \label{weights_optimization}} \\
\toprule
\textbf{Weight} & \textbf{Interpretation} \\
\midrule
\endfirsthead

\bottomrule
\endlastfoot

$\gamma_{\mathrm{OP}} \geqslant 0$ & \emph{Operational-cost weight}: converts monetary setup/maintenance costs into defender utility. \\[3pt]

$\gamma_{\mathrm{CPU}} \geqslant 0$ & \emph{CPU-cost weight}: penalty per GHz$\cdot$ms of computation; reflects server costs and energy. \\[3pt]

$\gamma_{\mathrm{MEM}} \geqslant 0$ & \emph{Memory-cost weight}: penalty per MB of RAM; critical in memory-constrained IoT devices. \\[3pt]

$\gamma_{\tau} \geqslant 0$ & \emph{Latency-penalty weight}: penalty per millisecond of delay. \\[3pt]

$\gamma_r \geqslant 0$ & \emph{Quantum-resilience reward}: bonus per unit of quantum resistance; encodes long-term security vision. \\
\end{longtable}

\noindent The defender then chose a mixed strategy $\boldsymbol{p} = (p_1, \ldots, p_n)$ to solve
\begin{align}  \label{defender_optimization_problem}
    \boldsymbol{p}^\star \;=\; &\arg\max_{\boldsymbol{p} \in \Delta_n}\ \sum_{i=1}^n p_i\,\ell_i
        \qquad (\ell_i \text{ given by } \eqref{per_algo_utility}) \notag \\[5pt]
    \text{subject to:}\quad
    &\sum_{i=1}^n p_i = 1
        (\text{simplex})\notag \\ \;
    &\sum_{i=1}^n p_i\,c_i^{(\mathrm{OP})} \leqslant
        C_{\mathrm{OP}} \ (\text{Operational cost}); \quad
    \sum_{i=1}^n p_i\,c_i^{(\mathrm{CPU})} \leqslant C_{\mathrm{CPU}} \
        (\text{CPU budget}) \notag \\
    &\sum_{i=1}^n p_i\,c_i^{(\mathrm{MEM})} \leqslant C_{\mathrm{MEM}} \
        (\text{memory budget}); \quad
    \sum_{i=1}^n p_i\,\tau_i \leqslant T_{\max} \ 
        (\text{latency constraint}) \notag \\
    &\sum_{i=1}^n p_i\,r_i \geqslant R_{\min} \
        (\text{quantum resilience}); \quad \ p_i \geqslant 0, \ i=1,\ldots,n\
        (\text{non-negativity}) \notag \\
    &\sum_{i \in \mathcal{F}_j} p_i \leqslant \alpha_j,\quad j=1,\ldots,J \quad
        (\text{enforcing diversification})
\end{align} \nobreak
\noindent The global constraints and their operational motivations are summarized in the
Table~\ref{constraints_optimization} below.

\begin{longtable}{p{3.5cm}p{9.5cm}}
\caption{Global constraints and their strategic motivations.} \label{constraints_optimization} \\
\toprule
\textbf{Constraint} & \textbf{Meaning} \\
\midrule
\endfirsthead

\toprule
\textbf{Constraint} & \textbf{Meaning} \\
\midrule
\endhead

\bottomrule
\endlastfoot
$C_{\mathrm{OP}} > 0$ & Maximum operational-cost available to the defender. \\[3pt]
$C_{\mathrm{CPU}} > 0$ & Maximum expected CPU resources per operation. \\[3pt]

$C_{\mathrm{MEM}} > 0$ & Maximum RAM available for cryptographic operations. \\[3pt]

$T_{\max} > 0$ & Upper bound on expected operation latency. \\[3pt]

$R_{\min} \in [0,1]$ & Required average quantum-resistance score. \\[3pt]

$\mathcal{F}_j \subseteq \{1,\ldots,J\}$ & Subset of algorithms sharing a common cryptographic foundation (e.g., $\mathcal{F}_1 = \{\text{RSA2048, RSA4096}\}$ (RSA family); $\mathcal{F}_2 = \{\text{Kyber, Dilithium}\}$ (lattice family)). \\[3pt]

$\alpha_j \in (0,1)$ & Maximum total probability allocated to family~$j$; ensures diversification across cryptographic families (e.g., $\alpha_j = 0.5$ prevents over 50\% concentration in any family). \\
\end{longtable}

\noindent We note that the problem~\eqref{defender_optimization_problem} is a LP with $n$ decision variables (probabilities $p_1, \ldots, p_n$), one equality constraint (simplex), five resource-inequality constraints (operational cost, CPU, memory, latency, quantum resilience), $J$ family-concentration inequalities, and $n$ non-negativity constraints.

\begin{remark} 
We adopt a constrained optimization framework with multiple constraints, since relying on a single aggregate cost is often too restrictive. Real cryptographic systems involve heterogeneous resources such as CPU, memory, bandwidth, and energy, which a single scalar measure cannot capture. Furthermore, the need for quantum resistance, driven by ``harvest-now-decrypt-later” threats and NIST recommendations, makes it necessary to incorporate post-quantum algorithms. NIST guidelines emphasize the importance of cryptographic agility, i.e., the ability to transition to new cryptographic algorithms when existing ones become vulnerable \cite{barker2018nist}. Finally, a model with only one constraint typically yields solutions with support size $|\mathrm{supp}(p^\star)|\leqslant 2$, whereas in practice deploying only one or two algorithms increases the risk of correlated failures; the multi-constraint formulation naturally produces more diverse portfolios.
\end{remark}

\section{Game Solution} \label{sec:game_solution}

This section presents the solution to the Stackelberg game defined in Section~\ref{sec:problem_formulation}. We first establish that an equilibrium always exists and characterize the support of the optimal defender strategy.  We subsequently describe the algorithm used to solve the attacker's subgame.
\medskip

\subsection{Stackelberg equilibrium and support size of the defender}

\noindent \emph{Existence of Stackelberg equilibrium.} For a given encryption algorithm $E_i$, the attacker chooses a set of attack methods $S$ that maximizes their utility while staying within the budget. The \emph{attacker’s best response} is defined as $S^{*}(i) \in \displaystyle\argmax_{S \subseteq \mathcal{D}(i)} F(S)$, and knowing that, the defender anticipates the attacker’s choice and selects a mixed strategy $\boldsymbol{q} \in \Delta_n$ over the encryption algorithms to maximize their expected utility. Therefore, a pair of strategies $(\boldsymbol{q}^{*}, S^{*}(\cdot))$ is a \emph{Stackelberg equilibrium} of the game if $\quad \displaystyle\boldsymbol{q}^{*} \in  \argmax_{\boldsymbol{q}\in\Delta_n}  \sum_{i=1}^{n} q_i \, \ell_i, \text{ where } \forall i \in \{1, \dots, n\},\ S^{*}(i) \in \argmax_{S \subseteq \mathcal{D}(i)}F(S)$; neither player can improve its payoff by unilaterally deviating from these strategies.

\begin{theorem}\label{Existence_Stackelberg_equilibrium}
  The cryptographic hybridization game always admits a Stackelberg equilibrium
  $\left(\boldsymbol{p}^\star,\{S^\star(i)\}_{i=1}^{n}\right)$.
\end{theorem}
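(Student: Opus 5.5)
The proof splits along the bi-level structure: I will first settle the inner attacker subgame for each realized algorithm, and then the outer defender linear program.

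\emph{Step 1 (attacker best responses exist).} Fix $i\in\{1,\dots,n\}$. The feasible family $\mathcal{K}(i)=\{S\subseteq\mathcal{D}(i):\sum_{A_j\in S}k_j\leqslant k\}$ is finite, because $\mathcal{D}(i)$ is finite. It is also nonempty, since $\emptyset\in\mathcal{K}(i)$ when $k\geqslant 0$ and the costs $k_j$ are nonnegative. A real-valued function on a finite nonempty set attains its maximum, so $\argmax_{S\in\mathcal{K}(i)}F(S)\neq\emptyset$ and some $S^\star(i)$ exists. Submodularity and non-monotonicity play no role here; they only affect computational tractability. If the argmax is not a singleton, I fix a tie-breaking rule. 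The natural choice is the standard strong-Stackelberg convention, in which the follower breaks ties in the leader's favour, i.e.\ picks the maximizer with the smallest $P_{\mathrm{succ}}$. Any deterministic selection works, however. This makes $P^\star_{\mathrm{succ}}(i)$, and hence $\ell_i$ in \eqref{per_algo_utility}, well-defined finite constants that do not depend on $\boldsymbol{p}$.

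\emph{Step 2 (defender optimum exists).} Once each $\ell_i$ is a constant, problem \eqref{defender_optimization_problem} is a linear program in $\boldsymbol{p}$. Its feasible region $P$ is the intersection of the simplex $\Delta_n$ with finitely many closed half-spaces: the five resource constraints and the $J$ family caps. So $P$ is a compact polytope, and the linear objective $\sum_i p_i\ell_i$ is continuous. By Weierstrass, or by the fundamental theorem of linear programming, a maximizer $\boldsymbol{p}^\star$ exists, and it can be taken to be a vertex of $P$. Combining $\boldsymbol{p}^\star$ with the selections $\{S^\star(i)\}$ from Step 1 satisfies the equilibrium definition given before Theorem~\ref{Existence_Stackelberg_equilibrium}. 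The attacker cannot improve because each $S^\star(i)$ is optimal given $E_i$. The defender cannot improve because $\boldsymbol{p}^\star$ is optimal given the induced $\ell_i$.

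\emph{Main obstacle: feasibility.} Existence of a maximizer requires $P\neq\emptyset$. The budgets $C_{\mathrm{OP}},C_{\mathrm{CPU}},C_{\mathrm{MEM}},T_{\max},R_{\min},\alpha_j$ could be mutually incompatible. For example, the family caps alone are infeasible if $\sum_j\alpha_j<1$ and the families $\mathcal{F}_j$ cover every algorithm. Feasibility therefore cannot be derived from the model and must be stated as a standing assumption: the constraint set of \eqref{defender_optimization_problem} is nonempty. This is the implicit well-posedness hypothesis of the theorem. I would make it explicit, noting that the theorem holds whenever the defender's policy is satisfiable; otherwise the game has no feasible leader strategy.

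I also expect to remark that, although the attacker's utility does not depend on $\boldsymbol{p}$ once $E_i$ is observed, a Stackelberg formulation is still meaningful. The leader's commitment determines which subgames are reached, and the tie-breaking convention of Step 1 is exactly what makes the leader's optimum attained rather than merely a supremum.
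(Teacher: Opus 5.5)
Your proof is correct and follows the same route as the paper's proof: an attacker best response exists as a maximum over a finite set, after which the $\ell_i$ are constants and the defender's linear objective attains its maximum on a compact set by Weierstrass. Your explicit nonemptiness assumption on the constraint set of \eqref{defender_optimization_problem} and your deterministic tie-breaking rule make precise two points the paper leaves tacit. The paper only checks that $\Delta_n$ is nonempty, and it asserts $P^\star_{\mathrm{succ}}(i)$ is well-defined even though distinct maximizers of $F$ can have different success probabilities. Your closing claim that tie-breaking is what secures attainment overstates things, though: here any fixed selection already makes the $\ell_i$ independent of $\boldsymbol{p}$, so the LP attains its maximum whenever it is feasible.
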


\begin{proof}
  The following conditions are verified.
  \begin{enumerate}
    \item \emph{Defender's strategy space:} $\Delta_n$ is non-empty, compact, and
          convex.
    \item \emph{Attacker's best response:} for each observed~$i$, the attacker maximizes a continuous function over the finite set $2^{\mathcal{D}(i)}$. A maximum exists; the optimum $S^\star(i)$ may not be unique, but $P_{\mathrm{succ}}^\star(i)$ is well-defined.
    \item \emph{Defender's objective:} given attacker's best responses, $\ell_i(P_{\mathrm{succ}}^\star(i))$ is fixed for each~$i$.  The objective $\displaystyle \sum_{i = 1}^{n} p_i\ell_i$ is linear (hence continuous and convex) in~$\boldsymbol{p}$.
    \item \emph{Constraints:} all resource and family constraints are linear,
          preserving convexity.
  \end{enumerate}
\noindent By the Weierstrass extreme-value theorem~\cite{martinez2014weierstrass}, a continuous function on a non-empty compact set attains its maximum. Hence, $\boldsymbol{p}^\star$ exists, and, together with the attacker's best responses, it constitutes a Stackelberg equilibrium.
\end{proof}

\noindent \emph{Support size of the optimal defender strategy.} The support size bound of the defender's optimal strategy is specified by the following Theorem~\ref{Support_size_bound}.

\begin{theorem}\label{Support_size_bound}
  \label{thm:extended_support}
  Let $\boldsymbol{p}^\star$ be an optimal solution to the problem~\eqref{defender_optimization_problem}, and let $u$ denote the number of constraints (excluding non-negativity) that are binding (active with equality) at~$\boldsymbol{p}^\star$, then $|\mathrm{supp}(\boldsymbol{p}^\star)| \;\leqslant\; u,
    \ \text{ with } \mathrm{supp}(\boldsymbol{p}^\star) = \left\{i \in \{1, \dots, n\} : p_i^\star > 0\right\}$.
\end{theorem}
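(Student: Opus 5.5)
The plan is to read $\boldsymbol{p}^\star$ as an optimal solution of the linear program~\eqref{defender_optimization_problem} in the sense in which LP solutions are produced and used in the paper: a \emph{basic} optimal solution, i.e.\ an extreme point of the feasible polyhedron $P=\{\boldsymbol{p}\in\mathbb{R}^n : \text{all constraints of }\eqref{defender_optimization_problem}\}$. This is a restriction of the statement, not the statement as worded. The bound does not hold for an arbitrary optimal $\boldsymbol{p}^\star$ when the optimal face is not a single point. For example, if every $\ell_i$ is equal and no resource or family constraint binds, the uniform distribution is optimal with full support $n$, while only the simplex constraint binds, so $u=1$. What I can prove is that every extreme-point optimum satisfies the bound, and hence some optimal solution does. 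By Theorem~\ref{Existence_Stackelberg_equilibrium} the LP has an optimum. $P$ is a nonempty polytope, because it lies in the compact set $\Delta_n$, so it has vertices, and a linear objective attains its maximum over a polytope at a vertex. So such a $\boldsymbol{p}^\star$ exists and is what the simplex method returns.

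The key tool is the standard characterization of vertices. A point $\boldsymbol{p}\in P\subseteq\mathbb{R}^n$ is an extreme point if and only if the constraints active at $\boldsymbol{p}$ contain $n$ linearly independent rows. I will list the constraints active at $\boldsymbol{p}^\star$ and count them. There are two kinds. The first kind is the $u$ binding constraints other than non-negativity. These include the simplex equality, which is always active, so $u\geqslant 1$, together with whichever of the five resource constraints and the $J$ family constraints hold with equality. The second kind is the active non-negativity constraints $p_i\geqslant 0$, which are exactly those with $i\notin\mathrm{supp}(\boldsymbol{p}^\star)$. There are $n-|\mathrm{supp}(\boldsymbol{p}^\star)|$ of these.

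Since the active set must contain $n$ linearly independent rows, its total size is at least $n$:
\begin{align*}
n \;\leqslant\; u + \bigl(n-|\mathrm{supp}(\boldsymbol{p}^\star)|\bigr),
\end{align*}
which rearranges to $|\mathrm{supp}(\boldsymbol{p}^\star)|\leqslant u$.

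A slightly sharper route restricts attention to the support coordinates. If we fix $p_i=0$ off the support, the reduced vector $(p_i)_{i\in\mathrm{supp}}$ must be the unique solution of the $u$ binding equations restricted to those columns. If it were not unique, we could move in both directions along a null direction and stay feasible, contradicting extremality. Uniqueness means the $u\times|\mathrm{supp}|$ restricted matrix has full column rank, which again gives the bound.

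The main obstacle is exactly the gap noted above: the whole argument rests on $\boldsymbol{p}^\star$ being a vertex. For a non-vertex optimum the bound can fail, so that case cannot be closed and must be excluded by the basic-solution interpretation. Within that interpretation, the one point to check carefully is that $\geqslant$-type constraints, such as the quantum-resilience constraint, and $\leqslant$-type constraints are both counted as rows of the active system. This is harmless, since activity is an equality condition regardless of the direction of the inequality.
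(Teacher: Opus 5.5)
Your proof is correct and is essentially the paper's argument. The paper adds slack variables and counts basic variables at an optimal BFS: the $6+J-u$ positive slacks must be basic, which leaves at most $u$ basic $p_i$. You do the same vertex count through the $n$ linearly independent active constraints in the original space. The paper's proof also quietly switches from ``an optimal solution'' to ``an optimal BFS'', so your restriction to an extreme-point optimum matches it. Your counterexample correctly shows that this restriction is necessary, since the statement as worded fails for non-basic optima.
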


\begin{proof}
  We use the theory of Basic Feasible Solutions (BFS) in linear programming.
  Introducing slack variables $s_0, s_1,\ldots,s_{4+J}$ for the inequality constraints,
  problem~\eqref{defender_optimization_problem} is equivalent to the standard-form LP
  \begin{align*}
     \text{maximize} & \quad \sum_{i=1}^n p_i\,\ell_i \\
     \text{subject to} & \quad
       \sum_{i=1}^n p_i = 1, \quad \sum_{i=1}^n p_i\,c_i^{(\mathrm{OP})} + s_{0} = C_{\mathrm{OP}} \quad \sum_{i=1}^n p_i\,c_i^{(\mathrm{CPU})} + s_1 = C_{\mathrm{CPU}},\\
       &\sum_{i=1}^n p_i\,c_i^{(\mathrm{MEM})} + s_2 = C_{\mathrm{MEM}}, \quad
       \sum_{i=1}^n p_i\,\tau_i + s_3 = T_{\max},\quad  \sum_{i=1}^n p_i\,r_i - s_4 = R_{\min},\\
      & \sum_{i \in \mathcal{F}_j} p_i + s_{4+j} = \alpha_j,\quad j=1,\ldots,J,\quad  p_i,\, s_\ell \geqslant 0 \quad \forall\, i,\ell.
  \end{align*}
This system has $n + (5+J)$ variables and $1 + (5+J)$ equality constraints. A BFS is a feasible point where exactly $m = 1 + (5 + J)$ variables are \emph{basic} (potentially nonzero) and the remaining $(n + 5 + J) - m = n - 1$ variables are \emph{non-basic} ($=0$).
By the fundamental theorem of LP, if an optimal solution exists, there is an optimal BFS. At an optimal BFS $\boldsymbol{p}^\star$, a constraint is binding (active) if its slack variable $s_\ell = 0$, and the number of binding constraints equals the number of basic variables among $\{p_1, \ldots, p_n\}$ plus the number of basic slack variables that are zero.
More precisely, let $u$ the number of constraints (among the $1 + 5 + J$ equalities) that are binding at $\boldsymbol{p}^\star$, then,
\begin{itemize}
    \item $u$ of the slack variables is zero (non-basic or basic at zero),
    \item The remaining $(1 + 5 + J) - u$ slack variables are positive (basic and nonzero).
    \item Since we have $1 + 5 + J$ equality constraints (a basis) and $(1 + 5 + J) - u$ slacks are basic and positive, the remaining $u$ basic variables must come from $\{p_1^\star, \ldots, p_n^\star\}$.
\end{itemize}\noindent
Therefore, at most $u$ probabilities are strictly positive, i.e., $|\text{supp}(p^\star)| = |\{i : p_i^\star > 0\}| \leqslant u.$
\end{proof}

\subsection{Solving the Attacker subgame}
\label{subsec:attacker_algorithm}

As established in Section~\ref{sec:problem_formulation}, problem~\eqref{attacker_problem} is a non-monotone submodular maximization problem under a knapsack constraint.  Table~\ref{literature_knapsack} surveys several approximation algorithms available in the literature for this class of problem.

\begin{table}[ht!]
  \centering
\caption{Several algorithms for non-monotone submodular maximization
  under a knapsack constraint; these results hold for the general problem class,
  of which our attacker subgame~\eqref{attacker_problem} is a special instance.}
  \label{literature_knapsack}
  \begin{tabular}{l c c c c}
    \toprule
    \textbf{Reference} & \textbf{Objective} & \textbf{Constraint}
        & \textbf{Approximation\ Ratio} & \textbf{Queries} \\
    \midrule
    Ene et al.~\cite{ene2017nearly} & General & Knapsack & $e$
        &  $\tilde{\mathcal{O}}(n^2)$ \\
    SampleGreedy~\cite{amanatidis2020_fast} & General & Knapsack & $5.83$
        &  $\tilde{\mathcal{O}}(n\log n)$ \\
    ParKnapsack~\cite{amanatidis2021} & General & Knapsack & $9.465$
        &  $\tilde{\mathcal{O}}(n)$ \\
    Cui et al.~\cite{cui}            & General & Knapsack & $8$
        & $\tilde{\mathcal{O}}(n)$ \\
    \bottomrule
  \end{tabular}
\end{table}

\noindent Although these algorithms represent significant theoretical advances, they could be poorly suited to our context, in terms of the precision of the solution obtained, for several reasons.

\begin{itemize}
    \item \emph{Instance size is small:} Unlike large combinatorial problems (e.g., 62,000 videos in Amanatidis et al.~\cite{amanatidis2020_fast}), each encryption algorithm~$E_i$ admits only a limited and well-defined set of cryptanalytic techniques; for instance, symmetric ciphers such as AES, public-key schemes like RSA, and post-quantum lattice-based constructions are subject to a finite catalog of attack classes (e.g., differential, algebraic, lattice reduction, etc.), as documented in \cite{boneh1999twenty,daemen2002design}. Consequently, for each $E_i$, $|\mathcal{D}(i)|$ remains relatively small in practice, making exact methods such as dynamic programming computationally feasible, fast, and memory-efficient.
    
    \item \emph{Exploitable structure:} The attacker’s success probability has a closed form $P_{\mathrm{succ}}(S) = 1 - \displaystyle \prod_{A_{j}\in S}(1 - s_{i,j})$, allowing incremental updates in $O(1)$ per-added method. This structure reduces complexity to $O(|\mathcal{D}(i)|\cdot k)$, unlike generic black-box approximation algorithms.

    \item \emph{Security implications of approximation:} Approximation factors can underestimate adversary strength. For example, a $5.83$-approximation could yield only $\approx 17\%$ of the optimal attack success, leading to systematically underestimated threats and potentially inadequate defenses (thereby creating security vulnerabilities). Real runs, as illustrated in our comparative example below, confirm this risk. In this particular run of \textsc{SampleGreedy}, the achieved $71.8\%$~of the optimal value represents a non-negligible underestimation of the true adversarial threat.
\end{itemize}

\noindent \emph{Illustrative example with four algorithms.} \label{Inneficiency_example}
To illustrate the performance gap between \textsc{SampleGreedy} and the exact DP method in our context with relatively small cryptographic instances, we present the following comparative example. Consider an attacker targeting $AES128$ ($E_{i} =$ $AES128$) with four attack methods and the parameters in Table~\ref{example} (selected only for illustration reasons and do not stem from real data):
\begin{table}[!ht]
\centering
\caption{\textbf{Instance parameters:} $v = 1200$, $k = 500$, $\varphi(x) = x$ (linear), $q = 0.414$ (optimal probability from~\cite{amanatidis2020_fast})}
\label{example}
\begin{tabular}{c c c l}
\toprule
\textbf{Method} & \textbf{Success $s_{i,j}$} & \textbf{Cost $k_j$} & \textbf{Description} \\
\midrule
$A_1$ & 0.20 & 100 & Brute force variant \\
$A_2$ & 0.35 & 200 & Linear cryptanalysis \\
$A_3$ & 0.42 & 280 & Differential cryptanalysis \\
$A_4$ & 0.25 & 120 & Side-channel attack \\
\bottomrule
\end{tabular}
\end{table}

\noindent \textsc{SampleGreedy} first identifies the best single method ($A_3$, utility 224.0), 
then runs the greedy construction: $A_4$ (best density $(v \cdot s_{i, 4} - k_4) / k_4 = 1.50$) is rejected at iteration~1; 
$A_1$ (density 1.40) is accepted at iteration~2; $A_2$ is then rejected; all remaining 
candidates are exhausted. The algorithm returns $S^* = \{A_1\}$ with 
$F(S^*) = 1200 \times 0.20 - 100 = 140$. Since the best single method $A_3$ dominates 
(utility 224.0), \textsc{SampleGreedy} finally outputs $\{A_3\}$ with utility $\emph{224.0}$.\\ \\
\noindent \emph{Application of DP:} The DP algorithm explores all feasible combinations systematically to find the optimal solution $\mathrm{DP}[i, k] =\max\left\{F(S)~\mid~\substack{S \subseteq \mathcal{D}_i},\displaystyle \sum_{A_j \in S} k_j \leqslant k \right\}$ over the set $\{A_1, A_2, A_3, A_4\}$. Its application of DP here returns the optimal solution $S^* = \{A_1, A_2, A_4\}$ with 
$P_{\mathrm{succ}} = 1-(0.80 \cdot 0.65 \cdot 0.75) = 0.61$ and $F(S^*) = 1200 \times 0.61 - 420 = \emph{312.0}$, 
a \emph{28.2\% improvement} over \textsc{SampleGreedy}. Full execution details are reported in Appendix~\ref{app:samplegreedy}.

\medskip

In terms of security implications, the optimal attack achieves a success probability of $61\%$ whereas \textsc{SampleGreedy} algorithm approximates $42\%$. Consequently, using \textsc{SampleGreedy} to estimate adversarial capability underestimates the threat by $19\%$, potentially resulting in inadequate security margins; the DP solution provides the defender with the true worst-case risk. We therefore propose to solve our attacker subgame with a \emph{hybrid algorithm} that chooses between DP and greedy based on problem size: If $|\mathcal{D}(i)|\leqslant \tau$ (threshold for an encryption algorithm~$ E_i$), then DP is applied, otherwise we employ the \textsc{SampleGreedy} algorithm. The proposed DP for the attacker subgame has a \emph{pseudo-polynomial} complexity $O(n \times k)$, where $n = |\mathcal{D}(i)|$ and and $k$ is the attacker budget, as established in \cite{axiotis2018capacitated}, and is tractable when the capacity parameter remains relatively small (typically $k\leqslant 10^5$), while larger values quickly render the approach infeasible due to memory and time constraints. That is, the effective DP table size $n \times k$ must remain within moderate bounds (on the order of $10^4$ – $10^6$) for practical use \footnote{https://www.w3tutorials.net/blog/why-is-the-knapsack-problem-pseudo-polynomial/}. Since the DP runtime grows linearly with the numerical capacity, solving larger instances becomes prohibitive, especially when the attacker problem must be solved repeatedly. Therefore, in our experiments, we restrict DP usage to instances that can be solved within a small, constant runtime, specifically $0.2$ seconds to find the switching threshold $\tau$, as shown in Figure~\ref{threshold}. This runtime selection is an illustrative heuristic rather than a general rule; alternative approaches or time limits could be adopted. For our experiments, we conducted the threshold (timing) analysis on a machine with the following capabilities:
\begin{itemize}
    \item CPU: Intel Core i5-1135G7, 4 physical cores / 8 threads, 2.4 GHz
    \item RAM: 16 GB
    \item OS: Windows 10, 64-bit
\end{itemize}
We note that the computation time of the algorithm depends on numerous parameters, notably the method costs $k_j$ and the success probabilities $s_{., j}$. For the fixed parameter sets used here, we obtained a threshold $\tau = 310$ as shown in Figure~\ref{threshold}. This threshold is not universal and may change depending on the parameter values or computational hardware; our goal is only to illustrate a procedure for determining $\tau$.

\begin{figure}[h!]
\vspace{-0.25cm}
\centering
\includegraphics[width=0.75\linewidth]{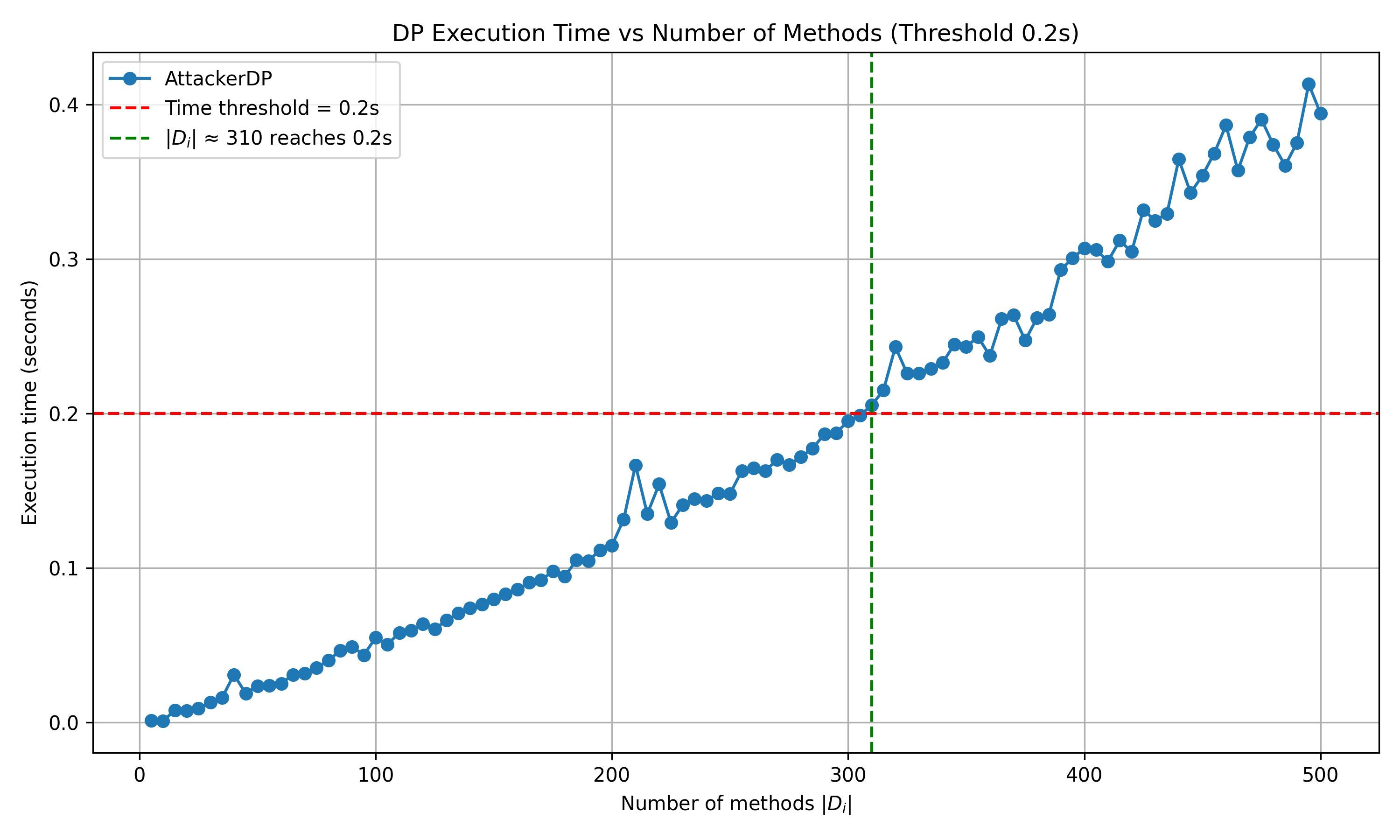}
\caption{\emph{Threshold approximation for the AttackerDP algorithm:}  
Using a DP execution time limit of $0.2$ seconds, the threshold number of methods is approximately $310$. This threshold is specific to this simulation and the chosen parameters (number of methods, budget, success probabilities, and costs) and may vary if the parameters or hardware change. Our parameters are as follows:\\[2mm]
\textbullet\ DP benchmark: maximum number of methods tested = $500$, $k$ = $500$, value $v = 1000$\\
\textbullet\ Success probability range: $s_{., j} \in [0.05, 0.85]$,\ \ Method cost range: $k_j \in [40, 200]$,\ \ Cost function: $\varphi(x) = x$}
\label{threshold}
\end{figure}

\noindent The pseudo-code for our DP is presented in Algorithm~\ref{algorithm_attacker_dp}, while the hybrid algorithm is outlined in Algorithm~\ref{algorithm_hybrid_attacker}.

\begin{breakablealgorithm}
\caption{\textsc{AttackerDP}($\mathcal{D}(i), \{s_{., j}\}, \{k_j\}, k, v, \phi$)}
\label{algorithm_attacker_dp}

\begin{algorithmic}[1]
    \Input Methods $\mathcal{D}(i) = \{p_1,\ldots,p_n\}$, success probabilities $\{s_{., j}\}$, costs $\{k_j\}$, budget $k$, value $v$, cost function $\phi$
    \Output Optimal subset $S^*$

    \State $n \leftarrow |\mathcal{D}(i)|$

    \State \Comment{\textcolor{blue}{Create DP table: $dp[i][c] = (max\_utility, failure\_prob, selected\_set)$}}
    \State Create table $dp[0..n][0..k]$

    \State \Comment{\textcolor{blue}{Base case: no methods, no cost}}
    \For{$c = 0$ to $k$}
        \State $dp[0][c] \leftarrow (0, 1.0, \emptyset)$
    \EndFor

    \State \Comment\textcolor{blue}{{Fill table for each method}}
    \For{$i = 1$ to $n$}
        \For{$c = 0$ to $k$}
            \State \Comment\textcolor{blue}{{Option 1: Don't take method $p_i$}}
            \State $(F_{\text{skip}}, FP_{\text{skip}}, S_{\text{skip}}) \leftarrow dp[i-1][c]$

            \State \Comment{\textcolor{blue}{Option 2: Take method $p_i$ (if cost allows)}}
            \If{$k_i \leqslant c$}
                \State $(F_{\text{prev}}, FP_{\text{prev}}, S_{\text{prev}}) \leftarrow dp[i-1][c - k_i]$
                \State $FP_{\text{new}} \leftarrow FP_{\text{prev}} \cdot (1 - s_{., i})$
                \State $P_{\text{succ}} \leftarrow 1 - FP_{\text{new}}$
                \State $F_{\text{take}} \leftarrow v \cdot P_{\text{succ}} - \phi(c)$
                \State $S_{\text{new}} \leftarrow S_{\text{prev}} \cup \{p_i\}$

                \State \Comment{\textcolor{blue}{Choose better option}}
                \If{$F_{\text{take}} > F_{\text{skip}}$}
                    \State $dp[i][c] \leftarrow (F_{\text{take}}, FP_{\text{new}}, S_{\text{new}})$
                \Else
                    \State $dp[i][c] \leftarrow dp[i-1][c]$
                \EndIf
            \Else
                \State $dp[i][c] \leftarrow dp[i-1][c]$
            \EndIf
        \EndFor
    \EndFor

    \State $(F^*, \_, S^*) \leftarrow \displaystyle\max_{c \leqslant k} dp[n][c]$ by utility  \Comment{\textcolor{blue}{Find best solution across all valid costs}}
    \State \Return $S^*$
\end{algorithmic}
\end{breakablealgorithm}

\begin{breakablealgorithm}
  \caption{\textsc{HybridAttacker}($\mathcal{D}_i,\, s,\, k_{\text{costs}},\, k_{\text{budget}},\, v,\, \phi$, $\tau$)}
    \label{algorithm_hybrid_attacker}
  \begin{algorithmic}[1]
    \Input Set of cryptanalysis methods $\mathcal{D}_i$ for algorithm $E_i$; success probabilities $s_{., j}$ and resource costs $k_j$ for each method $A_j$; total budget $k$; value of successful breach $v$; cost function $\phi$; threshold $\tau$ for switching between \textsc{AttackerDP} and \textsc{SampleGreedy}
    \Output Selected subset $S \subseteq \mathcal{D}_i$

    \If{$k_{\text{budget}} \cdot |\mathcal{D}_i| \leqslant 10^{5}$\ and \ $\tau \leqslant \tau'$}
      \State \Return $\textsc{AttackerDP}(\mathcal{D}_i, s, k_{\text{costs}}, k_{\text{budget}}, v, \phi)$
    \Else
      \State \Return $\textsc{SampleGreedy}(\mathcal{D}_i, s, k_{\text{costs}}, k_{\text{budget}}, v, \phi)$
    \EndIf
  \end{algorithmic}

\end{breakablealgorithm}

In the previous analysis, the defender is assumed to possess precise knowledge of the attacker’s budget, which determines the attacker’s operational capabilities. Such an assumption, however, may be unrealistic in practical security settings where adversarial resources are inherently uncertain. We therefore introduce an alternative decision framework in which the defender must act under uncertainty about the attacker’s budget. The proposed approach relies on a worst-case regret minimization principle: rather than optimizing performance for a specific assumed budget, the defender selects a strategy that minimizes the maximum performance loss (regret) relative to the optimal strategy that would have been chosen if the true attacker budget were known a priori. In other words, to limit the penalty induced by incorrect assumptions about adversarial capabilities, the defender seeks a robust allocation that performs as close as possible to the budget-aware optimal solution.

\section{Robust optimization and regret minimization: Uncertain attacker budget}
\label{sec:worst_case_regret_minimization}

In practice, the defender may not know the attacker's budget $k$ precisely.  Since the utility associated with each algorithm~$E_i$ depends on $k$ through the attacker's success probability, we write
\begin{align} \label{eq:ellK}
      \ell_i(k) = L_i\cdot\bigl(1 - P_{\mathrm{succ}}^\star(i,k)\bigr)
              - \gamma_{\mathrm{OP}}\,c_i^{(\mathrm{OP})}
              - \gamma_{\mathrm{CPU}}\,c_i^{(\mathrm{CPU})}
              - \gamma_{\mathrm{MEM}}\,c_i^{(\mathrm{MEM})}
              - \gamma_{\tau}\,\tau_i
              + \gamma_r\,r_i,
\end{align}
\noindent where \(P_{\mathrm{succ}}^\star(i, k)\) is the attacker's probability of success against the encryption algorithm~\(E_i\) under a budget constraint \(k\). When the defender does not know $k$, the optimization problem must account for this uncertainty. Denoting by $K$ the set of possible attacker budgets, the defender must then select a mixed strategy $\boldsymbol{p}$ that performs well across all possible values of $k \in K$ and, we thus extend the defender formulation \eqref{defender_optimization_problem} using two approaches from robust optimization that seem relevant in this context: the \emph{absolute worst-case optimization} and the \emph{minimax regret}. These approaches differ in the degree of conservativeness the defender wishes to adopt.

\paragraph{(A) Absolute worst-case objective (maximin)}   The defender chooses \(\boldsymbol{p}\in\Delta_n\) to minimize the worst-case expected loss that could arise as $k$ ranges over \(K\). This is conservative, since it optimizes performance against the most damaging attacker budget in the set. The defender then solves, \quad
$\boldsymbol{p}^{\mathrm{wc}}
= \displaystyle \arg\max_{\boldsymbol{p}\in\Delta_n}  \min_{k\in K}  \sum_{i=1}^n p_i \ell_i(k)$, \ ensuring that the defender’s performance is as good as possible in the worst possible budget scenario
\medskip

\noindent \textbf{Special case:} If in particular, the attacker is \emph{unconstrained} with no effective budget limitation (i.e. can run all available methods), then for each encryption algorithm~\(E_i\) the success probability is $P_{\mathrm{succ}}^\star(i,\infty) \;=\; 1 - \displaystyle \prod_{j\in\mathcal{D}(i)} \bigl(1 - s_{j, i}\bigr),$ leading to the following LP~\ref{eq:wE_ill},
\begin{equation}\label{eq:wE_ill}
\begin{aligned}
& \max_{\boldsymbol{p}\in\Delta_n}\quad \sum_{i=1}^n p_i \ell_i(\infty) \quad \text{such that} \\[4pt]
& \text{$\Delta_n$, operational and CPU budgets, latency, quantum-resilience, family limits}\ \text{as in } \eqref{defender_optimization_problem}.
\end{aligned}
\end{equation}
where $\ell_i(\infty) \;=\; \ell_i\bigl(k = \infty\bigr)$ by leveraging \eqref{eq:ellK}.
\noindent This is an ordinary LP (the same constraint matrix as \eqref{defender_optimization_problem}) and is solved directly by any LP solver.
\begin{remark}
Optimizing against an unconstrained attacker is equivalent to preparing for the strongest possible adversary, including one that could try every available cryptanalytic method. While this provides a very conservative safeguard (when the defender must guarantee performance against the strongest possible attacker), it may be more than necessary when the attacker’s budget is believed to be finite but uncertain within a range \(K\). In such situations, focusing solely on the worst-case scenario can result in overly cautious strategies designed for the largest conceivable budget. A minimax regret formulation is then preferable because it selects a strategy that remains close to optimal across all plausible budget levels.
\end{remark}

\paragraph{(B) Minimax regret objective} Believing that the previous formulation may lead to overly conservative policies because it optimizes only for the worst-case scenario, in practice, if the attacker’s budget turns out to be smaller than assumed, the defender may unnecessarily sacrifice performance (for instance, by selecting overly costly or slow algorithms). An alternative is to minimize standard worst-case regret: \emph{rather than protecting against the worst possible outcome, the defender compares the performance of a chosen strategy with the strategy that would have been optimal if the true attacker budget had been known in advance}. Let \(V^\star(k)\) denote the defender's optimal value when the attacker's budget \(k\) is known in advance, that is $V^\star(k) = \displaystyle \max_{\boldsymbol{q}\in\Delta_n} \sum_{i=1}^n q_i \, \ell_i(k).$
The defender then solves the minimax (absolute) regret problem~\ref{eq:robust_regret_cont}:
\begin{align}\label{eq:robust_regret_cont}
    \boldsymbol{p^{\mathrm{mmr}}} \;=\; \arg\min_{\boldsymbol{p}\in\Delta_n} \; \max_{k\in K} \; \Big\{ \sum_{i=1}^n p_i \, \ell_i(k) \;-\; V^\star(k) \Big\}.
\end{align}
The term $\text{regret}(\boldsymbol{p}, k) = \displaystyle \sum_{i=1}^n p_i \, \ell_i(k) \;-\; V^\star(k)$ is the (absolute) regret of committing to policy $\boldsymbol{p}$ when the actual budget is $k$. Assuming that the attacker's budget lies in a known discretized and finite set of scenarios \(K = \{k_1,\dots,k_m\}\), for each scenario \(k_s\), the utilities \(\ell_i(k_s)\) are computed by solving the associated attacker's subgame and exploited to compute the optimal defender's value, obtained by solving  $V^\star(k_s) = \displaystyle \max_{\boldsymbol{q}\in\Delta_n} \sum_{i=1}^n q_i \, \ell_i(k_s)$ (the defender's LP \eqref{defender_optimization_problem}). Therefore, the worst-case regret minimization problem is equivalent to the following LP~\ref{eq:mmr_lp}:
\begin{equation}\tag{WCRM-LP}\label{eq:mmr_lp}
    \begin{aligned}
        & \min_{\boldsymbol{p}\in\Delta_n,\, t\in\mathbb{R}}\quad  t \\[3pt]
        & \text{such that}\quad  V^\star(k_s) - \sum_{i=1}^n p_i \, \ell_i(k_s) \leqslant t
        \qquad s=1,\dots,m, \\[4pt]
        & \sum_{i=1}^n p_i = 1,\ \sum_{i=1}^n p_i c_i^{(\mathrm{OP})} \leqslant C_{\mathrm{OP}},\  \sum_{i=1}^n p_i \, c_i^{(\mathrm{CPU})} \leqslant C_{\text{CPU}},\ \sum_{i=1}^n p_i \, c_i^{(\mathrm{MEM})} \leqslant C_{\text{MEM}},\\
        & \sum_{i=1}^n p_i \, \tau_i \leqslant T_{\max},\quad
        \sum_{i=1}^n p_i \, r_i \geqslant R_{\min}, \quad\sum_{i\in\mathcal{F}_j} p_i \leqslant \alpha_j,\quad j=1,\dots,J
    \end{aligned}
\end{equation}

\begin{lemma}\label{lem:mmr_equiv}
    Problem \eqref{eq:mmr_lp} is equivalent to the discretized minimax regret problem
    \[
    \min_{\boldsymbol{p}\in\Delta_n} \max_{s=1,\dots,m} \Big\{V^\star(k_s) - \sum_{i=1}^n p_i \, \ell_i(k_s)\Big\}.
    \]
\end{lemma}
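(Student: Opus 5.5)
The plan is the standard epigraph reformulation of a finite max, done in two directions on the common feasible set. First I fix notation. Let $\mathcal{P}$ be the set of $\boldsymbol{p}\in\Delta_n$ satisfying the resource, latency, quantum-resilience and family constraints of \eqref{defender_optimization_problem}. These are exactly the constraints listed in \eqref{eq:mmr_lp} apart from the regret rows, so the minimization in the discretized minimax regret problem is read over $\mathcal{P}$.

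For each scenario $s$, the values $\ell_i(k_s)$ are fixed numbers, obtained from the attacker subgame via \eqref{eq:ellK}. By Theorem~\ref{Existence_Stackelberg_equilibrium}, $V^\star(k_s)$ is a well-defined finite constant. Hence each map $\boldsymbol{p}\mapsto R_s(\boldsymbol{p}) := V^\star(k_s)-\sum_i p_i\ell_i(k_s)$ is affine, and $R(\boldsymbol{p}) := \max_{s} R_s(\boldsymbol{p})$ is a maximum of finitely many affine functions. It is therefore continuous on the compact set $\mathcal{P}$. I assume $\mathcal{P}\neq\emptyset$, as is implicit in the definition of $V^\star$, so by Weierstrass the minimum $R^\star=\min_{\mathcal{P}} R$ is attained.

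\emph{Step 1 (projection).} Take any $(\boldsymbol{p},t)$ feasible for \eqref{eq:mmr_lp}. Then $R_s(\boldsymbol{p})\leqslant t$ holds for every $s$, so $R(\boldsymbol{p})\leqslant t$. Because $\boldsymbol{p}\in\mathcal{P}$, this gives $R^\star\leqslant R(\boldsymbol{p})\leqslant t$, and hence the optimal LP value is at least $R^\star$.

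\emph{Step 2 (lifting).} Conversely, take any $\boldsymbol{p}\in\mathcal{P}$ and set $t:=R(\boldsymbol{p})$. The pair $(\boldsymbol{p},t)$ is feasible for \eqref{eq:mmr_lp}, so the LP value is at most $R(\boldsymbol{p})$ for every such $\boldsymbol{p}$. In particular it is at most $R^\star$, and the two optimal values coincide.

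\emph{Step 3 (correspondence of optimizers).} If $(\boldsymbol{p}^\dagger,t^\dagger)$ is LP-optimal, Step 1 gives $R(\boldsymbol{p}^\dagger)\leqslant t^\dagger=R^\star$. So $\boldsymbol{p}^\dagger$ attains the minimax regret, and moreover $t^\dagger=R(\boldsymbol{p}^\dagger)$, meaning at least one regret row is tight at the optimum. In the other direction, if $\boldsymbol{p}^{\mathrm{mmr}}$ minimizes $R$, Step 2 shows that $(\boldsymbol{p}^{\mathrm{mmr}},R(\boldsymbol{p}^{\mathrm{mmr}}))$ is LP-optimal. The $\boldsymbol{p}$-projection of the LP's optimal set is therefore exactly the minimax-regret optimal set.

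None of these steps is hard; the part that needs care is the bookkeeping. One has to confirm that the constraint sets coincide, i.e.\ that the minimax problem is over $\mathcal{P}$ and not over all of $\Delta_n$. One also has to note the sign convention: \eqref{eq:robust_regret_cont} writes the regret with the opposite sign from \eqref{eq:mmr_lp}, whereas the lemma uses $V^\star-\sum p_i\ell_i\geqslant 0$, which is the meaningful one. Finally, $V^\star(k_s)$ must be finite, which follows from Theorem~\ref{Existence_Stackelberg_equilibrium} under feasibility, so that the regret rows of \eqref{eq:mmr_lp} are genuine linear inequalities in $(\boldsymbol{p},t)$.
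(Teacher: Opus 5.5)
Your proof is correct and takes the same approach as the paper: both introduce the epigraph variable $t$ as an upper bound on the $m$ affine regret functions, and note that minimizing $t$ over the common polyhedral feasible set is equivalent to minimizing their pointwise maximum. Your version is more careful than the paper's. It proves both inequalities between the optimal values, establishes that the minimum is attained, matches the two optimal sets, and notes that the minimization must run over the resource-constrained set rather than all of $\Delta_n$.
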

\begin{proof}
    For each scenario $s$, the optimum value of the defender LP when $k_s$ is known, denoted $V^\star(k_s)$, is independent of $\boldsymbol{p}$. For any strategy $\boldsymbol{p}$, the regret in the scenario \(s\) is an affine function in \(\boldsymbol{p}\). Introducing a scalar variable $t \in \mathbb{R}$ such that $t \geqslant V^\star(k_s) - \displaystyle\sum_{i=1}^n p_i \, \ell_i(k_s), \quad \forall s=1,\dots,m$ enforces that $t$ serves as an upper bound on the regret for all possible attacker budgets $k_s$. Moreover, the feasible set of defender strategies $\boldsymbol{p}$ is convex with feasibility imposed by several families of linear constraints: simplex condition $\displaystyle\sum_{i = 1}^{n} p_i = 1, p_i \geqslant 0$; resource budgets on CPU, memory, latency, quantum resilience, and family limits. All constraints are linear in $\boldsymbol{p}$, making the feasible region a polyhedron. Given any feasible $\boldsymbol{p}$, the worst-case regret over budget scenarios is $W_s = \displaystyle \max_s \Big\{V^\star(k_s) - \sum_{i=1}^n p_i \, \ell_i(k_s)\Big\}$ and the constraint in \ref{eq:mmr_lp} guarantees $t \geqslant W_s$ for all $s$, minimizing $t$ therefore has the same effect as directly minimizing $W_s$, leading to the equivalence $\displaystyle \min_{\boldsymbol{p} \in \Delta_n} \max_s \Big\{V^\star(k_s) - \sum_{i=1}^n p_i \, \ell_i(k_s)\Big\} \quad \iff \quad \min_{\boldsymbol{p}\in\Delta_n,\, t\in\mathbb{R}} t$. The objective is linear in $(\boldsymbol{p},t)$, so the robust worst-case regret problem reduces entirely to a linear program in which $t$ is minimized over a polyhedral feasible set.  
\end{proof}\noindent

\section{Performance Analysis} \label{sec:performance_analysis}

We illustrate the solution to the proposed game through two main simulation studies: the Stackelberg equilibrium under a known attacker budget and the robust minimax-regret strategy under attacker-budget uncertainty. All results are obtained by solving the relevant LPs associated with standard solvers, and the attacker subgames are solved using the \textsc{HybridAttacker} (Algorithm~\ref{algorithm_hybrid_attacker}). Table~\ref{defender_encryption_algorithms} lists several defender algorithms obtained from the literature, as well as the different parameters associated with them.

\begin{longtable}{lcccccc}
\caption{Assumed encryption algorithm parameters. The defender utility baseline values are $L_i = [85.0,\;95.0,\;92.0,\;125.0,\;140.0,\;100.0,\;105.0,\;78.0]$. Maximum budgets considered in the experimentation are as follows: $C_{\mathrm{OP}} = 8.0$, $C_{\mathrm{CPU}} = 2200000$, $C_{\mathrm{MEM}} = 1500.0$, $T_{\max} = 1200.0$, and $R_{\min} = 0.4$. The weighting coefficients are $\gamma_{\text{op}} = 0.02$, $\gamma_{\text{CPU}} = 0.00002$, $\gamma_{\text{MEM}} = 0.002$, $\gamma_{\tau} = 0.001$, and $\gamma_{r} = 0.06$. Family concentration constraints are given by $\{0{:}0.6,\;1{:}0.2,\;2{:}0.4,\;3{:}0.5\}$.}
\label{defender_encryption_algorithms}\\
\toprule
\textbf{Algorithm} & \textbf{$c_i^{(\mathrm{CPU})}$} & \textbf{$c_i^{(\mathrm{MEM})}$} & \textbf{$c_i^{\text{op}}$} & \textbf{$\tau_i$} & \textbf{$L_i$} & \textbf{$r_i$}\\
 & (cycles) & (bytes) & (rel.) & ($\mu$s) & &  \\
\midrule
\endfirsthead
\toprule
\textbf{Algorithm} & \textbf{$c_i^{(\mathrm{CPU})}$} & \textbf{$c_i^{(\mathrm{MEM})}$} & \textbf{$c_i^{\text{op}}$} & \textbf{$\tau_i$} & \textbf{$L_i$} & \textbf{$r_i$}\\
 & (cycles) & (bytes) & (rel.) & ($\mu$s) & & \\
\midrule
\endhead
\multicolumn{7}{l}{\textit{Symmetric Encryption (AEAD)}} \\
AES128-GCM & 37690 & 224 & 1.0 & 15.7 & 85 & 0.25  \\
AES-256-GCM & 41933 & 304 & 1.2 & 17.5 & 95 & 0.50 \\
ChaCha20-Poly130 & 7977 & 108 & 0.8 & 3.3 & 92 & 0.50\\
\midrule
\multicolumn{7}{l}{\textit{Post-Quantum}} \\
ML-KEM-768~\footnote{https://quarkslab.github.io/crypto-condor/2025.02.07/method/MLKEM.html} & 475000 & 4672 & 2.5 & 198 & 125 & 0.50 \\
ML-DSA-65 (Dilithium3)~\footnote{https://quarkslab.github.io/crypto-condor/devel/method/MLDSA.html} & 1950000 & 9277 & 3.0 & 811 & 140 & 0.50 \\
\midrule
\multicolumn{7}{l}{\textit{Classical Public Key}} \\
RSA-2048 & 73700000 & 1024 & 3 & 30700 & 100 & 0.0 \\
ECC P-256 & 2000000 & 256 & 2 & 825 & 105 & 0.0 \\
\midrule
\multicolumn{7}{l}{\textit{Hash Functions}} \\
SHA-256 & 10864.64 & 96 & 0.5 & 4.5 & 78 & 0.50 \\
\bottomrule
\end{longtable}

\noindent Some values used, however (especially values for $c_i^{\text{op}}$ and $L_i$), are values resulting from certain deductions in view of the data found in the references cited. The primary performance source for the cycles-per-byte measurements of our chosen encryption algorithms is the \emph{wolfSSL cryptographic benchmark}~\footnote{https://www.wolfssl.com/docs/benchmarks/}. To obtain consistent values across algorithms, the following assumptions are made: message size used for symmetric operations  $m = 1024 \text{ bytes}$ and processor frequency used to convert cycles to latency $f = 2.4 \text{ GHz}$. The latency value is estimated assuming a single cryptographic operation executed sequentially. The different values of our parameters considered are then computed as follows, $c_i^{CPU} = (\text{cycles/byte})_i \times m$ \bigg(or $c_i^{CPU} = \dfrac{f}{\text{ops/sec}_i}$ if the value is given in $ops/sec$\bigg). Latency is obtained from CPU cycles using $\tau_i = \dfrac{c_i^{CPU}}{f}$. Memory cost $c_i^{(\mathrm{MEM})}$ is derived from the algorithm specifications and corresponds to the approximate runtime memory footprint required to execute the algorithm, including key material and internal state. That said, this value varies depending on the characteristics on which each encryption algorithm is built. 

For each encryption algorithm, a list of cryptanalysis algorithms generally used and cited in the literature is given, which is represented by the following Table~\ref{attacker_cryptanalysis_algo}.  The cost parameter ($k$) represents the logarithm (base $2$) of the computational complexity stated in the literature. For instance, when Bogdanov et al. report a biclique attack with time complexity of $2^{126.1}$ operations, we extract $k = 126$ as the associated cost metric. The success probability ($s$) represents the likelihood that the attack successfully recovers the secret key when executed with the stated computational resources. For deterministic attacks, such as biclique cryptanalysis and exhaustive key search, these methods inherently ensure key recovery if completed, so we set $s = 0.99$.
In the case of probabilistic attacks (e.g., side-channel, fault injection, or quantum approaches involving measurement uncertainty), we derive success probabilities from empirical success rates reported in the literature, taking into account factors like measurement noise, fault injection effectiveness, and quantum decoherence. When the original papers do not provide explicit success probabilities, we estimate them based on the attack’s data complexity and the statistical confidence achievable with the given number of samples or traces. Overall, the values presented here are not directly stated in the referenced works; instead, they are approximations grounded in the characteristics described by the authors, aiming to reflect as much as possible realistic behavior across these attack types.

{\footnotesize{
\begin{longtable}{l l c c l}

\caption{Attack methods for each encryption algorithm}
\label{attacker_cryptanalysis_algo} \\

\toprule
\textbf{Target Algorithm} & \textbf{Attack Method} & \textbf{Prob. ($s$)} & \textbf{Cost ($k$)} & \textbf{Reference} \\
\midrule
\endfirsthead

\multicolumn{5}{l}{\textit{\textbf{AES128-GCM}}} \\
& Biclique Cryptanalysis & 0.99 & 126 & \cite{hutchison_biclique_2011} \\
& Improved biclique cryptanalysis & 0.99 & 126 & \cite{foo_improving_2015}\\
& Quantum Grover Search & 0.92 & 64 & \cite{takagi_applying_2016} \\
& Differential Fault Analysis & 0.85 & 25 & \cite{goos_differential_2003} \\
& Electromagnetic Side-Channel Attack & 0.8 & 13 & \cite{wang_far_2020} \\
& Exhaustive key search & 0.99 & 128 & \cite{takagi_applying_2016} \\

\midrule
\multicolumn{5}{l}{\textit{\textbf{AES-256-GCM}}} \\
& Biclique Cryptanalysis & 0.99 & 254 & \cite{hutchison_biclique_2011} \\
& Quantum Grover Search & 0.99 & 128 & \cite{takagi_applying_2016} \\
& Related-key boomerang & 0.95 & 100 & \cite{hutchison_related-key_2009} \\
& Cache‑Based Side‑Channel Attack & 0.88 & 15 & \cite{neve_complexity_2007} \\
& Correlation Power Analysis (CPA) & 0.99 & 12 & \cite{mestiri_evaluating_2025} \\

\midrule
\multicolumn{5}{l}{\textit{\textbf{ChaCha20-Poly1305}}} \\
& PNB-focused differential attack (7-round) & 0.5 & 255.62 & \cite{miyashita_pnb-focused_2021} \\
& Differential-Linear Cryptanalysis (7-round) & 0.12 & 189.7 & \cite{xu_differential-linear_2024} \\
& Differential-Linear Cryptanalysis (7.25-round) & 0.12 & 223.9 & \cite{xu_differential-linear_2024} \\
& Extension of PNBs distinguisher (7.25-round) & 0.08 & 228.24 & \cite{dey_advancing_2024} \\
& Extension of PNBs distinguisher (7.5-round) & 0.08 & 255.24 & \cite{dey_advancing_2024} \\
& Fault Injection & 0.95 & 8 & \cite{dilip_kumar_practical_2017} \\

\midrule
\multicolumn{5}{l}{\textit{\textbf{ML-KEM-768 (Kyber)}}} \\
& Lattice Reduction (BKZ) & 0.99 & 161 & \cite{bos2018crystals} \\
& Hybrid Attack & 0.08 & 150 & \cite{gopfert_hybrid_2017} \\
& Side-Channel (Non-pro CPA) & 0.50 & 60 & \cite{wang_far_2020} \\
& Timing Attack (KyberSlash2) & 0.99 & 25 & \cite{bernstein2011ebacs} \\
& Timing Attack (KyberSlash1) & 0.99 & 20.8 & \cite{bernstein2011ebacs} \\
& Brute Force (Quantum) & 0.99 & 192 & \cite{national_institute_of_standards_and_technology_us_secure_2015} \\

\midrule
\multicolumn{5}{l}{\textit{\textbf{ML-DSA-65 (Dilithium)}}} \\
& Lattice Reduction (BKZ) & 0.05 & 170 & \cite{national_institute_of_standards_and_technology_us_advanced_2023} \\
& Side-Channel (EM) & 0.7 & 12 & \cite{marzougui_profiling_2022}\\
& Side-Channel (Power) & 0.8 & 13 & \cite{wang_far_2020} \\
& Side‑Channel via Rejected Signatures & 0.8 & 30 & \cite{wang_far_2020} \\
& Signature Correction Attack & 0.65 & 16 & \cite{islam_signature_2022} \\

\midrule
\multicolumn{5}{l}{\textit{\textbf{RSA-2048}}} \\
& GNFS (Classical) & 0.99 & 112 & \cite{kleinjung_factorization_2010}\\
& Timing Attack & 0.4 & 12 & \cite{kocher_timing_1996} \\
& Differential Power Analysis & 0.45 & 10 & \cite{kocher_differential_1999} \\
& CRT Fault Attack & 0.50 & 8 & \cite{kocher_differential_1999} \\

\midrule
\multicolumn{5}{l}{\textit{\textbf{ECC P-256}}} \\
& Pollard's rho & 0.99 & 128 & \cite{pollard_monte_1978} \\
& Parallel Pollard’s Rho & 0.99 & 126 & \cite{chari_towards_1999} \\
& Differential Power Analysis & 0.45 & 10 & \cite{kocher_differential_1999} \\

\midrule
\multicolumn{5}{l}{\textit{\textbf{SHA-256}}} \\
& Preimage Brute Force & 0.99 & 256 & \cite{national_institute_of_standards_and_technology_us_advanced_2023} \\
& Grover's (Quantum) & 0.99 & 128 & \cite{shor1997} \\
& Side‑Channel Attacks & 0.40 & 10 & \cite{kocher_timing_1996} \\
\bottomrule
\end{longtable} }}

\subsection{Stackelberg Equilibrium: Known Budget ($ k = 40$)} \label{subsec:base_equilibrium}

Table~\ref{tab:final-results} presents the Stackelberg equilibrium obtained from the defender optimization when the attacker has a budget $k = 40$, an attack value $v = 300$, and a linear cost function $\varphi(x) = x$, computed under operational, CPU, memory, latency, resilience, and family-diversification constraints given in Table~\ref{defender_encryption_algorithms}.

\begin{table}[h!]
\centering
\small
\caption{Stackelberg equilibrium results under multi-constraint defender model ($k = 40$).}
\label{tab:final-results}
\resizebox{\textwidth}{!}{%
\begin{tabular}{l c c c c c p{4.5cm} c c}
\toprule
\textbf{Algorithm} & \textbf{$p_i$} & \textbf{\%} & \textbf{$\ell_i$} &
\textbf{op} & \textbf{family} & \textbf{Attacker Strategy} &
\textbf{Breach prob.} & \textbf{Global Metrics} \\
\midrule
AES-128-GCM            & 0.0000 & 0.00\%  & 1.33  & 1.0 & 1 &
DFA, EM Side-Channel & 0.970 & \\[1ex]

AES-256-GCM            & 0.0000 & 0.00\%  & -0.51 & 1.2 & 1 &
CPA & 0.990 & \\[1ex]

ChaCha20-Poly1305      & 0.2000 & 20.00\% & 4.24  & 0.8 & 1 &
Fault Injection & 0.950 & \\[1ex]

ML-KEM-768             & 0.2000 & 20.00\% & -17.81 & 2.5 & 3 &
KyberSlash1 & 0.990 & \\[1ex]

ML-DSA-65              & 0.0000 & 0.00\%  & -49.99 & 3.0 & 3 &
EM Side-Channel, Power Side-Channel & 0.940 & \\[1ex]

RSA-2048               & 0.0000 & 0.00\%  & -1490.31 & 3.0 & 0 &
Timing Attack, DPA, CRT Fault & 0.835 & \\[1ex]

ECC-P256               & 0.2000 & 20.00\% & 16.37 & 2.0 & 0 &
DPA & 0.450 & \\[1ex]

SHA-256                & 0.4000 & 40.00\% & 46.41 & 0.5 & 2 &
Side-Channel & 0.400 &
\textbf{Exp. Gain: 19.12} \\[1ex]

\bottomrule
\end{tabular}%
}
\end{table}

\noindent The Stackelberg equilibrium results reveal that the defender adopts a
\emph{mixed strategy} over four algorithms: SHA-256, ChaCha20-Poly1305, ML-KEM-768, and ECC-P256. Among these, SHA-256 receives the largest probability mass ($40\%$), reflecting its high utility score ($\ell_i = 46.41$) combined with very low operational and computational costs. ChaCha20-Poly1305, ML-KEM-768, and ECC-P256 each receive a probability of $20\%$, ensuring diversity across algorithm families while respecting the defender's system constraints. Several algorithms are excluded from the equilibrium support. AES128-GCM and AES-256-GCM receive zero probability, likely due to their low expected utilities while their attacker success probabilities are high under the optimal attack strategy. Similarly, ML-DSA-65 and RSA-2048 are excluded regarding their excessive resource requirements and unfavorable utility scores. In particular, RSA-2048 exhibits a strongly negative utility value ($\ell_i = -1490.31$), reflecting the large computational overhead combined with limited resilience benefits in the modeled environment.
The resulting defender strategy satisfies all system constraints with the following values: operational cost: 1.2600 / 8.0; CPU usage: 500941.2560 / 2200000.0; Memory usage: 1045.6000 / 1500.0; latency: 207.0600 / 1200.0; resilience:  0.4000 and \emph{expected objective (gain): 19.1217}. From a security perspective, the equilibrium produces an expected breach probability of approximately $0.638$ against an optimal attacker response. Although this probability may appear relatively high in isolation, it reflects the realistic assumption that the attacker can adaptively select the most profitable attack vector subject to their resource constraints. The equilibrium therefore represents the defender's best possible strategy under worst-case attacker behavior.

These results highlight the practical importance of the proposed Stackelberg optimization approach, which enables the defender to randomize across multiple algorithm families strategically. Following the equilibrium distribution then balances performance constraints and security robustness, limiting the attacker's ability to focus resources on a single vulnerability and forcing them to distribute efforts across different cryptanalytic methods. In real-world deployments, such a strategy can be interpreted as a dynamic cryptographic configuration policy where systems periodically rotate or probabilistically select among several approved primitives. For instance, with the above parameters, a secure communication infrastructure could deploy a mixture of symmetric encryption (ChaCha20-Poly1305), post-quantum key encapsulation (ML-KEM-768), classical elliptic-curve cryptography (ECC-P256), and hashing mechanisms (SHA-256). To further highlight the importance of the proposed optimal strategy, we compare it with some heuristics in which the defender chooses his strategy randomly while the attacker responds optimally. Figure~\ref{optimal_vs_random} illustrates this comparison, showing that the defender's gain is lower when they deviate from the optimal recommendation.
\begin{figure}[ht!]
    \centering
    \caption{\textbf{Relevance of the proposed optimal defender strategy.}
    The figure shows that any deviation from the optimal strategy leads to a reduction in the defender's utility. This degradation is particularly significant for purely random strategies. In contrast, strategies that partially incorporate the proposed optimization framework (by enforcing only a subset of the constraints) achieve improved performance, but still incur a noticeable loss compared to the fully optimal solution.}
    \label{optimal_vs_random}
    \includegraphics[width=1\linewidth]{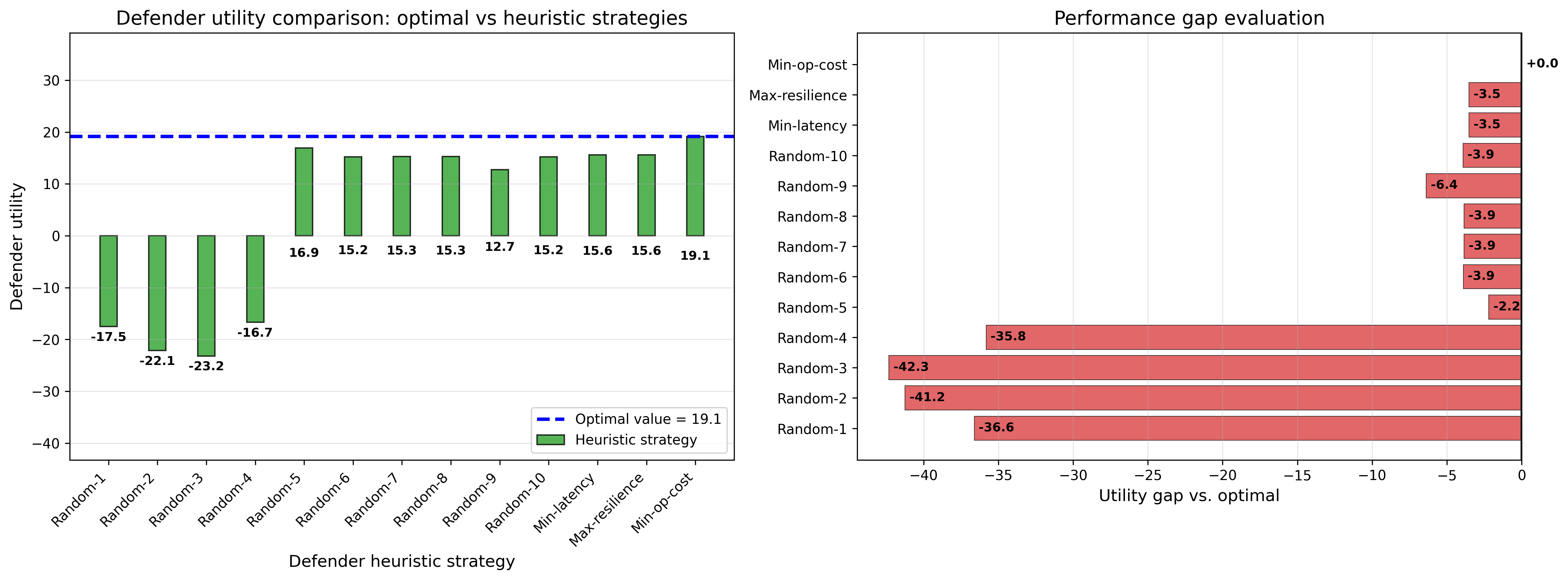}
\end{figure}

\noindent The random heuristic strategies shown in Figure~\ref{optimal_vs_random} are generated by solving an LP with randomly sampled objective coefficients. For each instance, we sample a coefficient vector $c \in \mathbb{R}^n$ from an independently and identically distributed standard normal distribution $\mathcal{N}(0,1)$ (through the function \emph{np.random.randn(n)}),  where $n$ is the number of candidate algorithms. We then solve $\min c^\top p$ subject to the same resource constraints used for the optimal Stackelberg strategy. Because the objective is linear, the LP solver returns an extreme point of the feasible polytope, a vertex selected according to a random linear direction. The coefficients are drawn from a Gaussian distribution to avoid systematic bias toward any particular region of the feasible space. All generated strategies are tested in terms of constraint feasibility before inclusion in the comparison. The min-opt-cost, min-latency, and max-resilience heuristics are obtained by optimizing operational cost, latency, and quantum resilience, respectively, using the same constraint set.

\subsection{Robust Minimax-Regret Strategy: Uncertain Budget} \label{subsec:robust_results}

We now evaluate the robust formulation from Section~\ref{sec:worst_case_regret_minimization} under different attacker-budget scenarios. Unlike the Stackelberg model, which assumes a known adversarial budget, here the defender accounts for uncertainty in the attacker's resources and aims for a strategy that performs reasonably well across a range of possible budgets. We consider attacker-budget scenarios $k \in \{11, 15, 20, 25, 30\}$ and, for each scenario, we compute the attacker's optimal subset of attack methods and derive the corresponding defender utilities $\ell_i(k)$.

\paragraph{Scenario utilities and optimal defender values}

\begin{align*}
k = 11: &\quad V^\star = 70.154, \\
&\boldsymbol{\ell}(k) =
[11.5275,\;93.5418,\;4.2352,\;105.938,\;81.605,\;-1456.808,\;16.373,\;46.406] \\
k = 15: &\quad V^\star = 54.054, \\
&\boldsymbol{\ell}(k) =
[11.5275,\;-0.5082,\;4.2352,\;105.938,\;-30.395,\;-1456.808,\;16.373,\;46.406] \\
k = 20: &\quad V^\star = 53.997, \\
&\boldsymbol{\ell}(k) =
[1.3275,\;-0.5082,\;4.2352,\;105.938,\;-30.395,\;-1479.308,\;16.373,\;46.406] \\
k = 25: &\quad V^\star = 19.122, \\
&\boldsymbol{\ell}(k) =
[1.3275,\;-0.5082,\;4.2352,\;-17.812,\;-49.995,\;-1479.308,\;16.373,\;46.406] \\
k = 30: &\quad V^\star = 19.122, \\
&\boldsymbol{\ell}(k) =
[1.3275,\;-0.5082,\;4.2352,\;-17.812,\;-49.995,\;-1490.308,\;16.373,\;46.406]
\end{align*}

\paragraph{Minimax-regret LP}

The robust defender strategy is computed by solving the~\ref{eq:mmr_lp}, the optimal solution yields the minimax-regret strategy $\boldsymbol{p^{\mathrm{mmr}}}$, shown in Table~\ref{regret_values}.

\begin{table}[ht!]
\centering
\caption{Robust minimax-regret strategy $\boldsymbol{p^{\mathrm{mmr}}}$.}
\label{regret_values}
\begin{tabular}{p{5cm}p{4cm}p{3cm}}
\toprule
Algorithm & Probability $\boldsymbol{p^{\mathrm{mmr}}}$ & Percentage \\
\midrule
AES-128-GCM        & 0.000000 & 0.00\% \\
AES-256-GCM        & 0.171190 & 17.12\% \\
ChaCha20-Poly1305  & 0.000000 & 0.00\% \\
ML-KEM-768         & 0.282278 & 28.23\% \\
ML-DSA-65          & 0.000000 & 0.00\% \\
RSA-2048           & 0.000000 & 0.00\% \\
ECC-P256           & 0.146531 & 14.65\% \\
SHA-256            & 0.400000 & 40.00\% \\
\bottomrule
\end{tabular}
\end{table}

The regret values $V^\star(k) - \displaystyle\sum_i p_i^{\mathrm{mmr}} \ell_i(k)$ associated with each scenario are,
\begin{itemize}
\item $k=11$: regret $=3.2750$ ; \quad $k=15$: regret $=3.2750$ ; \quad $k=20$: regret $=3.2188$
\item $k=25$: regret $=3.2750$ ; \quad $k=30$: regret $=3.2750$
\end{itemize}
\noindent Table~\ref{tab:regret_matrix} compares the worst-case regret of each scenario-specific optimal strategy with the minimax-regret strategy.

\begin{table}[H]
\centering
\caption{Regret comparison matrix across strategies and scenarios, ${}^*$Optimal for this scenario. Opt$(k = z)$ suppose that the defender adopts the optimal strategy associated with the budget $z$ against another budget.}
\label{tab:regret_matrix}
\begin{tabular}{lcccccc}
\toprule
\multirow{2}{*}{Scenario} &
\multicolumn{5}{c}{Actual budget $k$} &
\multirow{2}{*}{Max Regret} \\
& 11 & 15 & 20 & 25 & 30 & \\
\midrule
Opt$(k=11)$ & $0.000^*$ & 2.7095 & 2.6533 & 4.1628 & 4.1628 & 4.1628 \\
Opt$(k=15)$ & 16.1005 & $0.000^*$ & 1.9381 & 3.9105 & 3.9105 & 16.1005 \\
Opt$(k=20)$ & 16.1567 & 0.0562 & $0.000^*$ & 2.1951 & 2.1951 & 16.1567 \\
Opt$(k=25)$ & 26.2824 & 10.1819 & 10.1257 & $0.000^*$ & 0.0000 & 26.2824 \\
Opt$(k=30)$ & 26.2824 & 10.1819 & 10.1257 & 0.0000 & $0.000^*$ & 26.2824 \\
\midrule
$\boldsymbol{p^{\mathrm{mmr}}}$ & 3.2750 & 3.2750 & 3.2188 & 3.2750 & 3.2750 & \textbf{3.2750} \\
\bottomrule
\end{tabular}
\end{table}

\noindent The results show that strategies designed for a single attacker budget can struggle when the actual attacker capabilities differ. Consider the strategy optimized for $k=25$: it works perfectly when the attacker budget really is $25$ (zero regret), but performs poorly, with a regret of $26.2824$, when the budget turns out to be smaller. The other scenario-specific strategies behave similarly. This suggests that assuming a fixed attacker budget and optimizing accordingly can lead to brittle security configurations.

The minimax-regret strategy $\boldsymbol{p^{\mathrm{mmr}}}$ takes a different approach by balancing performance across all scenarios. Its worst-case regret is $\mathbf{3.275}$, which outperforms the best single-scenario strategy (Opt$(k=11)$ with max regret $4.1628$) and is far better than the worst ones (Opt$(k=25)$ with max regret $26.2824$), highlighting the value of robust optimization when we are unsure about the attacker's actual resources. Interestingly, the minimax-regret strategy assigns probability mass to four algorithms: SHA-256 ($40\%$), ML-KEM-768 ($28.23\%$), AES-256-GCM ($17.12\%$), and ECC-P256 ($14.65\%$). SHA-256 receives the largest weight, reflecting its favorable utility values and consistent performance across scenarios. ML-KEM-768 follows, bringing post-quantum resilience into the mix. AES-256-GCM and ECC-P256 receive moderate weights, enough to maintain diversification without adding much risk.

\paragraph{Breach probability robustness}

A complementary analysis is performed using breach-probability regret, and the optimal breach probabilities (computed by taking the optimal strategy of the defender for each attacker budget scenario) are $B^\star = [0.2077,\;0.3454,\;0.3454,\;0.6331,\;0.6331].$
The minimax-regret strategy achieves a worst-case breach regret of $\mathbf{0.0500}$, outperforming all scenario-specific strategies as shown in Table~\ref{breach_probability} below.

\begin{table}[!ht]
\centering
\caption{Breach-probability regret: comparison of strategies across scenarios}
\label{breach_probability}
\begin{tabular}{lcccccc}
\toprule
\multirow{2}{*}{Scenario} &
\multicolumn{5}{c}{Actual budget $k$} &
\multirow{2}{*}{Max Regret} \\
& 11 & 15 & 20 & 25 & 30 & \\
\midrule
Opt(K=11.0) & 0.0000$^{*}$ & 0.0603 & 0.0603 & 0.0637 & 0.0637 & 0.0637 \\
Opt(K=15.0) & 0.1688 & 0.0000$^{*}$ & 0.0545 & 0.0616 & 0.0616 & 0.1688 \\
Opt(K=20.0) & 0.1377 & 0.0000 & 0.0000$^{*}$ & 0.0089 & 0.0089 & 0.1377 \\
Opt(K=25.0) & 0.2323 & 0.0946 & 0.0946 & 0.0000$^{*}$ & 0.0049 & 0.2323 \\
Opt(K=30.0) & 0.2323 & 0.0946 & 0.0946 & 0.0049 & 0.0000$^{*}$ & 0.2323 \\
\midrule
MMR $\boldsymbol{p^{\mathrm{mmr}}}$ & 0.0182 & 0.0500 & 0.0500 & 0.0418 & 0.0418 & \textbf{0.0500} \\
\bottomrule
\end{tabular}
\end{table}

\noindent For comparison, the best scenario-specific strategy (Opt$(k=11)$) has a worst-case breach regret of $0.0623$, while the worst-performing strategy (Opt$(k=25)$) reaches $0.2323$, confirming that the robust defender strategy not only minimizes utility regret but also maintains stable security performance in terms of expected breach probability.

\paragraph{Practical interpretation} From a deployment perspective, the minimax-regret strategy can be interpreted as a probabilistic cryptographic configuration policy that the system rotates among multiple algorithms drawn from symmetric, post-quantum, elliptic-curve, and hashing families, rather than committing to a single cryptographic primitive, thereby reducing the effectiveness of targeted attacks. And even if the attacker has more resources than we expected, performance doesn't crash, it just degrades more smoothly. So robust optimization gives us a practical way to build hybrid cryptographic systems that handle uncertainty about the attacker's capabilities.
		
\section{Conslusion} \label{sec:conclusion}	

In this work, we presented a Stackelberg game-theoretic model for cryptographic algorithm hybridization. Our approach enables defenders to probabilistically select among multiple encryption algorithms, thereby enhancing overall system resilience and reducing an attacker's probability of successful cryptanalysis. We formulated the defender-attacker interaction as a two-stage game where the attacker, operating under budget constraints, observes the defender's strategy commitment before selecting their optimal attack portfolio. The attacker's subgame is characterized as a non-monotone submodular maximization problem subject to knapsack constraints, a problem class typically addressed through approximation algorithms in the literature. However, our problem exhibits structural properties, specifically, the relatively small number of available cryptanalysis methods per algorithm, combined with the explicit functional form of the attacker's utility function, allowing us to employ DP to obtain exact optimal solutions. This stands in contrast to existing approximation algorithms (such as \textsc{SampleGreedy}), which are designed for large-scale instances where dynamic programming becomes computationally intractable, but which only guarantee bounded approximation ratios rather than exact optimality. For the defender's problem, we formulated a multi-constraint linear programming model that accounts for operational costs, computational resources, latency requirements, and quantum resilience thresholds, while enforcing cryptographic family diversification constraints. Recognizing that the defender may lack precise knowledge of the attacker's budget, we extended our framework to incorporate worst-case regret minimization, ensuring robust performance across a range of attacker capabilities. Our experimental evaluation demonstrates the effectiveness of the proposed optimization-based approach compared to heuristic strategies. The comparison against feasible baseline strategies, including latency-minimization and quantum-resilience-maximization, reveals that the optimal mixed strategy achieves superior performance under these constraints, validating the relevance of our principled optimization framework. The model, however, remains limited by its static Stackelberg structure and the independence of cryptanalysis methods, which overlook synergies, shared precomputation, and implementation‑level attacks. The linear formulation of defender costs also abstracts away nonlinear or protocol‑level constraints that could exist in real deployments. Extending the framework to Bayesian settings, dynamic or repeated interactions, and protocol‑aware hybridization would broaden its applicability and capture more realistic adversarial behavior.

\appendix \label{appendix}

\section{Execution trace: \textsc{SampleGreedy} vs.\ DP}
\label{app:samplegreedy}

\noindent\textbf{Instance:} $v=1200$, $K=500$, $\phi(x)=x$, $q=0.414$, seed $= 65$.

\smallskip
\noindent\textbf{Phase 1 — Best single method.}
Individual utilities: $A_1{=}140$, $A_2{=}220$, $A_3{=}224$, $A_4{=}180$. 
Best single: $A_3$ (utility 224.0).

\smallskip
\noindent\textbf{Phase 2 — Greedy construction.}

\begin{itemize}
  \item \textbf{It.\ 1:} $S=\emptyset$, $R=500$. Densities $ = (v \times s_{i,j} - k_j) / k_j$: $A_4 = 1.50$ (best), $A_1 = 1.40$, $A_2 = 1.10$, $A_3{=}0.80$. Coin flip: \textsc{reject} $A_4$.
  \item \textbf{It.\ 2:} $S=\emptyset$, $R=500$. Best density: $A_1{=}1.40$. Coin flip: \textsc{accept} $A_1$. $\Rightarrow S=\{A_1\}$, $R=400$,\ the marginal gain of adding $A_2$ and $A_3$ are $\Delta P_{\mathrm{succ}}(A_2 \mid \{A_1\}) = 1 - (0.8)(0.65) - 0.2 = 0.28$ and $\Delta P_{\mathrm{succ}}(A_3 \mid \{A_1\}) = 1 - (0.8)(0.58) - 0.2 = 0.336$ respectively.
  \item \textbf{It.\ 3:} Best density: $A_2 = (1200 \times 0.28 - 200) / 200 = 0.68$. Coin flip: \textsc{reject} $A_2$.
  \item \textbf{It.\ 4:} Best density: $A_3 = (1200 \times 0.336 - 280) / 280 = 0.44$. Coin flip: \textsc{reject} $A_3$.
  \item \textbf{It.\ 5:} No feasible candidates. \textsc{Terminate}.
\end{itemize}

\noindent Greedy output: $\{A_1\}$, $F=140.0 <$ best single.
\textsc{SampleGreedy} returns $\mathbf{\{p_3\}}$, $F^{\textsc{SG}}=\mathbf{224.0}$.

\smallskip
\noindent\textbf{Phase 3 — Exact DP.}
Optimal solution: $S^*_{\mathrm{DP}} = \{A_1, A_2, A_4\}$, cost $= 420 \leqslant 500$,
$P_{\mathrm{succ}} = 0.61$, $F^{\mathrm{DP}} = \mathbf{312.0}$.

\smallskip
\noindent\textbf{Optimality gap:} $(312 - 224)/312 \approx \mathbf{28.2\%}$.
This confirms that \textsc{SampleGreedy}, due to its randomized acceptance, 
may miss the optimal combination even on small instances, whereas DP guarantees 
the exact optimum at negligible computational cost for such problem sizes.

\bibliographystyle{plain}
\bibliography{references}

\end{document}

%% file: preamble.tex
\usepackage{amsmath,amssymb,amsthm}
\usepackage{booktabs}
\usepackage{hyperref}
\usepackage{graphicx}
\usepackage{listings}
\usepackage{float}
\usepackage{color}
\usepackage{tabularx}
\newtheorem{theorem}{Theorem}

\usepackage{xcolor}
\newtheorem{remark}{Remark}

\newtheorem{lemma}{Lemma}

\usepackage{algorithm}      % provides the algorithm float
\usepackage{algpseudocode}  % provides algorithmic environment with \State,\For,\If,\Return,...
\usepackage{multirow}
\usepackage{array}
\usepackage{amsfonts}
\usepackage{caption}
\usepackage{hyperref}
\usepackage{subcaption}
\usepackage{geometry}
\newcommand{\Input}{\item[\textbf{Input:}]}
\newcommand{\Output}{\item[\textbf{Output:}]}

\usepackage{todonotes}

\newenvironment{breakablealgorithm}
  {
   \begin{center}
     \refstepcounter{algorithm}
     \hrule height .8pt depth 0pt \kern 2pt
     \renewcommand{\caption}[2][\relax]{%
       {\raggedright \textbf{Algorithm~\thealgorithm} ##2\par}%
       \kern 2pt\hrule\kern 2pt
     }
  }
  {
     \kern 2pt\hrule\relax
   \end{center}
  }

\usepackage{longtable}
\usepackage{booktabs} % pour \toprule etc.

\DeclareMathOperator*{\argmax}{arg\,max}